\newtheorem{theorem}{Theorem}
\newtheorem{corollary}[theorem]{Corollary}
\newtheorem{example}[theorem]{Example}
\def\ra{\rightarrow}
\newcommand{\comments}[1]{}
\def\na{\overline{a}}
\def\nc{\overline{c}}
\def\nd{\overline{d}}
\tikzset{tt/.style={decoration={
  markings,
  mark=at position .485 with {\arrow{>}},
  mark=at position .515 with {\arrow{<}}},postaction={decorate}}}
\begin{document}

\title[]{On the Monotonicity of a Nondifferentially Mismeasured Binary Confounder}

\author[]{Jose M. Pe\~{n}a\\
IDA, Link\"oping University, Sweden\\
jose.m.pena@liu.se}

\date{\currenttime, \ddmmyydate{\today}, \currfilename}

\maketitle

\begin{abstract}
Suppose that we are interested in the average causal effect of a binary treatment on an outcome when this relationship is confounded by a binary confounder. Suppose that the confounder is unobserved but a nondifferential proxy of it is observed. We show that, under certain monotonicity assumption that is empirically verifiable, adjusting for the proxy produces a measure of the effect that is between the unadjusted and the true measures.
\end{abstract}

\begin{figure}[t]
\begin{tabular}{c|c}
\begin{tikzpicture}[inner sep=1mm]
\node at (0,0) (A) {$A$};
\node at (1,.5) (D) {$D$};
\node at (2,0) (Y) {$Y$};
\node at (1,1.5) (C) {$C$};
\path[->] (A) edge (Y);
\path[->] (C) edge (A);
\path[->] (C) edge (Y);
\path[->] (C) edge (D);
\end{tikzpicture}
&
\begin{tikzpicture}[inner sep=1mm]
\node at (0,0) (A) {$A$};
\node at (1,.5) (D) {$D$};
\node at (2,0) (Y) {$Y$};
\node at (1,1.5) (C) {$C$};
\path[->] (A) edge (Y);
\path[->] (C) edge (A);
\path[->] (C) edge (Y);
\path[<-] (C) edge (D);
\end{tikzpicture}
\end{tabular}\caption{Causal graphs, where $Y$ is a discrete or continuous random variable, and $A$, $C$ and $D$ are binary random variables. Moreover, $C$ is unobserved.}\label{fig:graphs}
\end{figure}
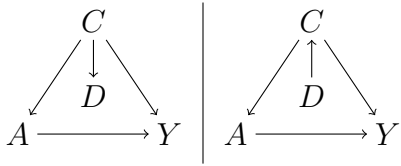

\section{Introduction}

Suppose that we are interested in the average causal effect of a binary treatment $A$ on an outcome $Y$ when this relationship is confounded by a binary confounder $C$. Suppose also that $C$ is nondifferentially mismeasured, meaning that (i) $C$ is not observed and, instead, a binary proxy $D$ of $C$ is observed, and (ii) $D$ is conditionally independent of $A$ and $Y$ given $C$. The causal graph to the left in Figure \ref{fig:graphs} represents the relationships between the random variables.

\citet{Greenland1980} argues that adjusting for $D$ produces a partially adjusted measure of the average causal effect of $A$ on $Y$ that is between the crude (i.e., unadjusted) and true (i.e., adjusted for $C$) measures. \citet[Lemma 1]{OgburnandVanderWeele2012a} show that, although this result does not always hold, it does hold under some monotonicity condition in $C$. Specifically, $E[Y|A,C]$ must be nondecreasing or nonincreasing in $C$. Since this condition can be interpreted as that the average causal effect of $C$ on $Y$ must be in the same direction among the treated ($A=1$) and the untreated ($A=0$), \cite{OgburnandVanderWeele2012a} argue that the condition is likely to hold in most applications in epidemiology. Unfortunately, the condition cannot be verified empirically because $C$ is unobserved. Therefore, one has to rely on substantive knowledge to verify it. Moreover, the condition is sufficient but not necessary. \cite{OgburnandVanderWeele2013} extend these results to the case where $C$ takes more than two values. If there are at least two independent proxies of $C$, then \citet{Miaoetal.2018} show that the causal effect of $A$ on $Y$ can be identified under certain rank condition.

In this paper, we prove that if the monotonicity condition holds in $D$, then it holds in $C$ as well. Since $D$ is observed, the monotonicity condition in $D$ can be verified empirically. Therefore, if no substantive knowledge is available but data are, then combining our result with Lemma 1 by \citet{OgburnandVanderWeele2012a} may allow us to conclude that the partially adjusted effect is between the crude and the true ones and, thus, that the partially adjusted effect is a better approximation to the true effect than the crude one. We also report experiments showing that most random parameterizations of the causal graph to the left in Figure \ref{fig:graphs} result in a partially adjusted effect that lies between the crude and the true ones, although only half of them satisfy the monotonicity condition in $D$. This confirms that the condition is sufficient but not necessary. This result should be interpreted with caution because, in fields like epidemiology, one is not typically concerned with a random parameterization but, rather, with one carefully engineered by evolution. We provide a partial answer to this question by characterizing a nonmonotonic case (albeit empirically untestable) where the partially adjusted effect still lies between the crude and the true ones. Finally, we also prove that if the monotonicity condition holds in $D$, then it also holds in $C$ when $D$ is a driver of $C$ rather than a proxy, i.e. $D$ causes $C$. We illustrate the relevance of this result with an example on transportability of causal inference across populations.

The rest of the paper is organized as follows. Sections \ref{sec:proxy} and \ref{sec:driver} present our results when $D$ is a proxy and a driver of $C$, respectively. Section \ref{sec:discussion} closes with some discussion.

\section{On a Proxy of the Confounder}\label{sec:proxy}

Consider the causal graph to the left in Figure \ref{fig:graphs}, where $Y$ is a discrete or continuous random variable, and $A$, $C$ and $D$ are binary random variables. The graph entails the following factorization:
\begin{equation}\label{eq:factorization}
p(A,C,D,Y)=p(C)p(D|C)p(A|C)p(Y|A,C).
\end{equation}
Let $A$ take values $a$ and $\na$, and similarly for $C$ and $D$. Let $A$, $D$ and $Y$ be observed and let $C$ be unobserved. Let $Y_a$ and $Y_{\na}$ denote the counterfactual outcomes under treatments $A=a$ and $A=\na$, respectively. The average causal effect of $A$ on $Y$ or true risk difference ($RD_{true}$) is defined as $RD_{true}=E[Y_a]-E[Y_{\na}]$. It can be rewritten as follows \cite[Theorem 3.3.2]{Pearl2009}:
\[
RD_{true} = E[Y|a,c]p(c) + E[Y|a,\nc]p(\nc) - E[Y|\na,c]p(c) - E[Y|\na,\nc]p(\nc).
\]
Since $C$ is unobserved, $RD_{true}$ cannot be computed. It can be approximated by the unadjusted average causal effect or crude risk difference ($RD_{crude}$):
\[
RD_{crude} = E[Y|a] - E[Y|\na]
\]
and by the partially adjusted average causal effect or observed risk difference ($RD_{obs}$):
\[
RD_{obs} = E[Y|a,d]p(d) + E[Y|a,\nd]p(\nd) - E[Y|\na,d]p(d) - E[Y|\na,\nd]p(\nd).
\]

We say that $E[Y|A,D]$ is nondecreasing in $D$ if 
\begin{equation}\label{eq:nd}
E[Y|a, d] \geq E[Y|a, \nd] \text{ and } E[Y|\na, d] \geq E[Y|\na, \nd].
\end{equation}
Likewise, $E[Y|A,D]$ is nonincreasing in $D$ if
\begin{equation}\label{eq:ni}
E[Y|a, d] \leq E[Y|a, \nd] \text{ and } E[Y|\na, d] \leq E[Y|\na, \nd].
\end{equation}
Moreover, $E[Y|A,D]$ is monotone in $D$ if it is nondecreasing or nonincreasing in $D$. \citet[Lemma 1]{OgburnandVanderWeele2012a} show that if $E[Y|A,C]$ is monotone in $C$, then $E[Y|A,D]$ is monotone in $D$. The following theorem proves the converse result. The relevance of this result is as follows. \citet[Result 1]{OgburnandVanderWeele2012a} show that if $E[Y|A,C]$ is monotone in $C$, then $RD_{obs}$ lies between $RD_{true}$ and $RD_{crude}$. The antecedent of this rule cannot be verified empirically, because $C$ is unobserved. Therefore, one must rely on substantive knowledge to apply the rule. The following theorem implies that, luckily, the rule also holds for $D$ and, thus, that the antecedent can be verified empirically.

\begin{theorem}\label{the:theorem}
Consider the causal graph to the left in Figure \ref{fig:graphs}. If $E[Y|A,D]$ is monotone in $D$, then $E[Y|A,C]$ is monotone in $C$.
\end{theorem}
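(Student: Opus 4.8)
The plan is to write each conditional mean $E[Y|A,D]$ as a posterior-weighted average of the two confounder-specific means $E[Y|A,c]$ and $E[Y|A,\nc]$, and then show that the weights shift in a direction that is common to the treated and the untreated. By the factorization (\ref{eq:factorization}) the proxy is nondifferential, so in particular $Y \ci D \mid A,C$ and hence $E[Y|A,c,D]=E[Y|A,c]$. Thus for each value of $A$,
\begin{equation*}
E[Y|A,d]=E[Y|A,c]\,p(c|A,d)+E[Y|A,\nc]\,p(\nc|A,d),
\end{equation*}
and likewise with $\nd$ in place of $d$. First I would subtract the two expressions and use $p(\nc|A,\cdot)=1-p(c|A,\cdot)$ to obtain the key identity
\begin{equation*}
E[Y|A,d]-E[Y|A,\nd]=\delta_A\,\big(E[Y|A,c]-E[Y|A,\nc]\big),\qquad \delta_A:=p(c|A,d)-p(c|A,\nd).
\end{equation*}
So the sign of the left-hand side, which encodes monotonicity in $D$, equals the sign of the $C$-contrast, which encodes monotonicity in $C$, up to the sign of the scalar $\delta_A$.

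The crux is to show that $\delta_a$ and $\delta_{\na}$ share the same sign and that this sign is independent of $A$. I would rewrite $\delta_A$ using Bayes' rule, invoking the other half of nondifferentiality, $D \ci A \mid C$ (again immediate from (\ref{eq:factorization})), to replace $p(d|A,c)$ by $p(d|c)$. Writing $\pi_A=p(c|A)$, $s=p(d|c)$ and $t=p(d|\nc)$, a direct computation collapses the numerator of $\delta_A$ to a single term, giving
\begin{equation*}
\delta_A=\frac{\pi_A(1-\pi_A)\,(s-t)}{\big[s\pi_A+t(1-\pi_A)\big]\big[(1-s)\pi_A+(1-t)(1-\pi_A)\big]}.
\end{equation*}
The denominator is positive and $\pi_A(1-\pi_A)\ge 0$, so $\mathrm{sign}(\delta_A)=\mathrm{sign}(s-t)=\mathrm{sign}\big(p(d|c)-p(d|\nc)\big)$, which is manifestly the same for $A=a$ and $A=\na$. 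I expect the cancellation of the $A$-dependence in this numerator to be the main obstacle: it is precisely the step where the proxy structure (that $p(d|c)$ and $p(d|\nc)$ do not depend on $A$) does the essential work, and it is what makes the converse direction go through where the forward direction was easy.

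To finish, suppose $E[Y|A,D]$ is nondecreasing in $D$, so both left-hand sides in the key identity are $\ge 0$. Assuming positivity ($0<p(c|A)<1$) and that the proxy is genuinely associated with the confounder ($p(d|c)\neq p(d|\nc)$), the common sign of $\delta_a$ and $\delta_{\na}$ is a fixed nonzero constant $\sigma\in\{+1,-1\}$; dividing the identity through by $\delta_A$ then shows that $E[Y|A,c]-E[Y|A,\nc]$ has sign $\sigma$ for both $A=a$ and $A=\na$. Hence $E[Y|A,C]$ is nondecreasing in $C$ when $\sigma=+1$ and nonincreasing in $C$ when $\sigma=-1$, so in either case it is monotone in $C$; the nonincreasing hypothesis (\ref{eq:ni}) is handled identically with all inequalities reversed. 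I would also flag the degenerate case $p(d|c)=p(d|\nc)$, in which $D$ carries no information about $C$: then $\delta_A\equiv 0$, the monotonicity hypothesis in $D$ holds vacuously, and nothing can be inferred about $C$, which is exactly why the association assumption on the proxy is required.
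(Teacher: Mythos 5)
Your proof is correct, and it takes a genuinely different route from the paper's. The paper argues by contradiction over four sign cases: assuming monotonicity in $D$ but nonmonotonicity in $C$, it uses the mixture representation (from $Y \ci D \mid A,C$) to derive $p(c|a,d) \leq p(c|a,\nd)$ while $p(c|\na,d) \geq p(c|\na,\nd)$, then passes to log odds via the sigmoid representation, uses $A \ci D \mid C$ to cancel the $A$-dependent terms, and obtains contradictory inequalities between the likelihood ratios $p(d|c)/p(d|\nc)$ and $p(\nd|c)/p(\nd|\nc)$ unless $p(d|c)=p(d|\nc)$, which would make $C$ and $D$ independent. You instead prove directly the exact identity
\[
E[Y|A,d]-E[Y|A,\nd]=\delta_A\big(E[Y|A,c]-E[Y|A,\nc]\big),\qquad \delta_A=p(c|A,d)-p(c|A,\nd),
\]
and compute $\delta_A$ in closed form; your algebra checks out, since the cross terms in the numerator cancel to $\pi_A(1-\pi_A)(s-t)$, so $\mathrm{sign}(\delta_A)=\mathrm{sign}\big(p(d|c)-p(d|\nc)\big)$ independently of $A$. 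This is precisely the fact the paper's contradiction exploits implicitly, namely that $\delta_a$ and $\delta_{\na}$ cannot have opposite strict signs, and both proofs use the same two conditional independences in the same roles. What your version buys: it is direct, dispenses with the four-case analysis (both monotonicity directions are handled at once by the common sign), makes explicit the positivity assumption $0<p(c|A)<1$ and the association assumption $p(d|c)\neq p(d|\nc)$ that the paper leaves implicit until its final sentence, produces a quantitative attenuation factor $\delta_A$ rather than only a sign, and yields the converse direction (Lemma 1 of \citet{OgburnandVanderWeele2012a}) for free from the same identity. What the paper's route buys is lighter algebra at each step: it never computes the posterior difference, relying only on the monotonicity of $\sigma()$ and $\ln()$. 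Your treatment of the degenerate case $p(d|c)=p(d|\nc)$, where the hypothesis in $D$ holds vacuously and nothing transfers to $C$, matches the paper's closing remark that $D$ would then not be a mismeasured confounder.
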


\begin{proof}
Assume to the contrary that $E[Y|A,C]$ is not monotone in $C$, i.e.
\begin{equation}\label{eq:nm1}
E[Y|a, c] \leq E[Y|a, \nc] \text{ and } E[Y|\na, c] \geq E[Y|\na, \nc]
\end{equation}
or
\begin{equation}\label{eq:nm2}
E[Y|a, c] \geq E[Y|a, \nc] \text{ and } E[Y|\na, c] \leq E[Y|\na, \nc].
\end{equation}
This gives four cases to consider: Whether Equation \ref{eq:nd} or \ref{eq:ni} holds, and whether Equation \ref{eq:nm1} or \ref{eq:nm2} holds. Hereinafter, we focus on the first case. The other cases are similar.

Assume that Equations \ref{eq:nd} and \ref{eq:nm1} hold. We show next that the first inequalities in Equations \ref{eq:nd} and \ref{eq:nm1} imply that $p(c|a,d) \leq p(c|a, \nd)$. Specifically, 
\begin{align*}
E[Y|a, d] &\geq E[Y|a, \nd]\\
E[Y|a, d, c] p(c|a,d) + E[Y|a, d, \nc] p(\nc|a,d) &\geq\\
E[Y|a, \nd, c] p(c|a, \nd) + E[Y|a, \nd, \nc] p(\nc|a, \nd) &\\
E[Y|a, c] p(c|a,d) + E[Y|a, \nc] p(\nc|a,d) &\geq\\
E[Y|a, c] p(c|a, \nd) + E[Y|a, \nc] p(\nc|a, \nd)&
\end{align*}
because $Y$ is conditionally independent of $D$ given $A$ and $C$ due to the causal graph under consideration and, thus,
\begin{align*}
E[Y|a, c] p(c|a,d) + E[Y|a, \nc] (1 - p(c|a,d)) &\geq\\
E[Y|a, c] p(c|a, \nd) + E[Y|a, \nc] (1 - p(c|a, \nd)) &\\
(E[Y|a, c] - E[Y|a, \nc]) p(c|a,d) &\geq\\
(E[Y|a, c] - E[Y|a, \nc]) p(c|a, \nd) &\\
p(c|a,d) &\leq p(c|a, \nd)
\end{align*}
because $E[Y|a, c] \leq E[Y|a, \nc]$ by Equation \ref{eq:nm1}.

Furthermore,
\[
p(c|a,d) = \frac{p(a,d|c)p(c)}{p(a,d|c)p(c) + p(a,d|\nc)p(\nc)} = \frac{1}{1+\exp(-\delta(a,d))} = \sigma(\delta(a,d))
\]
where
\[
\delta(a,d) = \ln \frac{p(a,d|c)p(c)}{p(a,d|\nc)p(\nc)}
\]
is known as the log odds, and $\sigma()$ is known as the logistic sigmoid function \cite[Section 4.2]{Bishop2006}. Note that $\sigma()$ is an increasing function. Then,
\begin{align*}
p(c|a,d) &\leq p(c|a, \nd)\\
\delta(a,d) &\leq \delta(a,\nd)\\
\ln p(a|c) + \ln p(d|c) + \ln p(c) &-\\
 \ln p(a|\nc) - \ln p(d|\nc) - \ln p(\nc) &\leq\\
\ln p(a|c) + \ln p(\nd|c) + \ln p(c) &-\\
 \ln p(a|\nc) - \ln p(\nd|\nc) - \ln p(\nc)&
\end{align*}
because $A$ is conditionally independent of $D$ given $C$ due to the causal graph under consideration and, thus,
\begin{align}\nonumber
\ln p(d|c) - \ln p(d|\nc) &\leq \ln p(\nd|c) - \ln p(\nd|\nc)\\\nonumber
\ln \frac{p(d|c)}{p(d|\nc)} &\leq \ln \frac{p(\nd|c)}{p(\nd|\nc)}\\
\frac{p(d|c)}{p(d|\nc)} &\leq \frac{p(\nd|c)}{p(\nd|\nc)}.\label{eq:leq}
\end{align}

Likewise, the second inequalities in Equations \ref{eq:nd} and \ref{eq:nm1} imply that $p(c|\na,d) \geq p(c|\na, \nd)$, which implies that
\[
\frac{p(d|c)}{p(d|\nc)} \geq \frac{p(\nd|c)}{p(\nd|\nc)}
\]
which contradicts Equation \ref{eq:leq} unless equality holds. However, equality only occurs if $p(d|c)=p(d|\nc)$, which implies that $C$ and $D$ are independent and, thus, that $D$ is not a mismeasured confounder.
\end{proof}

\begin{corollary}\label{cor:corollary}
Consider the causal graph to the left in Figure \ref{fig:graphs}. If $E[Y|A,D]$ is monotone in $D$, then $RD_{obs}$ lies between $RD_{true}$ and $RD_{crude}$.
\end{corollary}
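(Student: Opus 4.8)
The plan is to derive the corollary by chaining Theorem \ref{the:theorem} with a result already in the literature, so that no fresh computation is needed; the analytic work has all been carried out inside the proof of the theorem. The point is that Theorem \ref{the:theorem} turns the empirically verifiable hypothesis on $D$ into exactly the substantive hypothesis on $C$ that is known to control the ordering of the three risk differences, thereby making the rule of \citet{OgburnandVanderWeele2012a} applicable from observable quantities alone.

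First I would apply Theorem \ref{the:theorem}: granting that $E[Y|A,D]$ is monotone in $D$, the theorem yields that $E[Y|A,C]$ is monotone in $C$. This is the only step that uses the structure of the causal graph to the left in Figure \ref{fig:graphs} and the nondifferential-proxy assumption. Next I would invoke \citet[Result 1]{OgburnandVanderWeele2012a}, which asserts that whenever $E[Y|A,C]$ is monotone in $C$, the quantity $RD_{obs}$ lies between $RD_{true}$ and $RD_{crude}$. Composing the two implications yields the statement at once.

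The only thing to check --- more a matter of bookkeeping than a genuine obstacle --- is that the conclusion of Theorem \ref{the:theorem} supplies precisely the hypothesis demanded by Result 1. Here it helps that Result 1 requires monotonicity of $E[Y|A,C]$ in $C$ in \emph{some} direction rather than a prescribed one, so there is no need to track whether Theorem \ref{the:theorem} preserves or reverses the direction of monotonicity when passing from $D$ to $C$. Since both statements are phrased in terms of the same monotonicity condition on the conditional mean in the confounder, the match is immediate and the corollary follows without further argument.
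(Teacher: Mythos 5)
Your proposal is correct and matches the paper's proof exactly: the paper likewise derives the corollary by combining Theorem \ref{the:theorem} with Result 1 of \citet{OgburnandVanderWeele2012a}. Your extra remark that the direction of monotonicity need not be tracked is accurate, since both the theorem's conclusion and Result 1's hypothesis are stated for monotonicity in either direction.
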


\begin{proof}
The result follows directly from Theorem \ref{the:theorem} and \citet[Result 1]{OgburnandVanderWeele2012a}.
\end{proof}

\begin{table}[t]
\caption{Results of 10000 random parameterizations of the causal graph to the left in Figure \ref{fig:graphs}.}\label{tab:results}
\begin{tabular}{|r|l|r|r|r|}
\hline
In-between && Nondec. in $D$ & Noninc. in $D$ & Neither\\
\hline
2430& Nondec. in $C$&1175&1255&0\\
\hline
2461& Noninc. in $C$&1225&1236&0\\
\hline
4460& Neither&0&0&5109\\
\hline
\end{tabular}
\end{table}

\begin{figure}[t]
\begin{tabular}{cc}
\includegraphics[scale=.5]{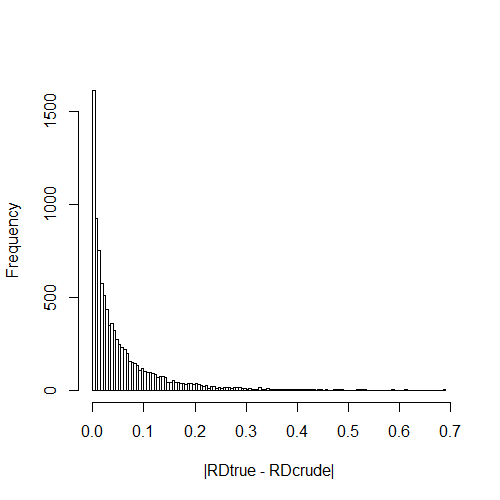}&\includegraphics[scale=.5]{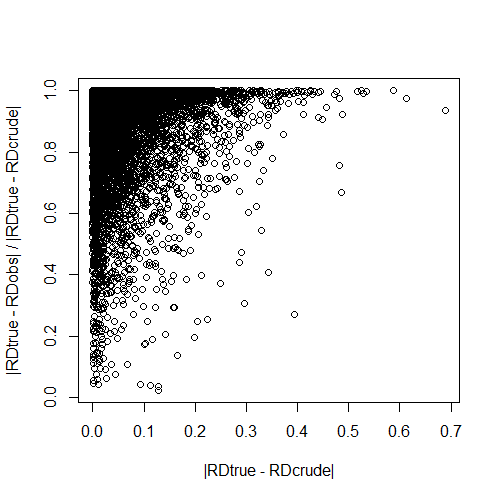}\\
\includegraphics[scale=.5]{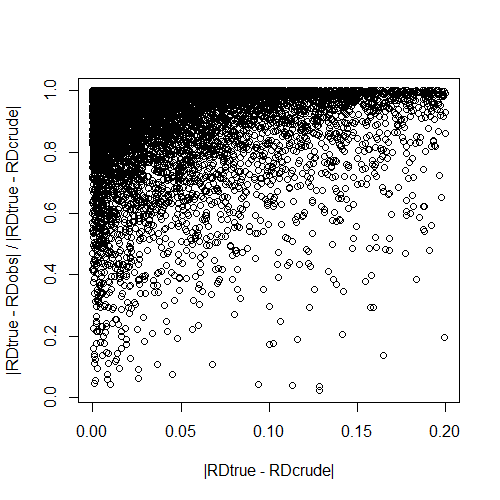}&\includegraphics[scale=.5]{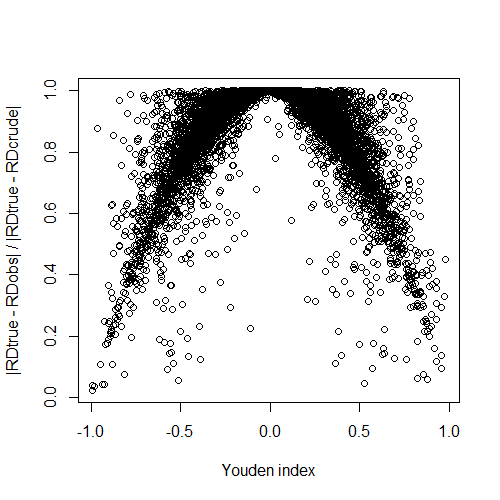}
\end{tabular}\caption{(tl) Histogram of interval length. (tr) Distance between $RD_{obs}$ and $RD_{true}$ relative to interval length. (bl) Zoom of previous plot. (br) Distance between $RD_{obs}$ and $RD_{true}$ relative to interval length, as a function of correlation between $C$ and $D$ when measured by Youden index.}\label{fig:plots}
\end{figure}

\subsection{Experiments}

In this section, we report some experiments that shed additional light on the relationships between the various risk differences. Specifically, we randomly parameterized 10000 times the causal graph to the left in Figure \ref{fig:graphs} by parameterizing the terms in the right-hand side of Equation \ref{eq:factorization} with parameter values drawn from a uniform distribution.\footnote{Code available at \texttt{https://www.dropbox.com/s/pa8y2sausib6hcn/monotonicity.R?dl=0}.} For each parameterization, we then computed $RD_{true}$, $RD_{obs}$ and $RD_{crude}$. The results are reported in Table \ref{tab:results}. Of the 10000 runs, 4891 were monotone in $C$ and also in $D$, as expected from \citet[Lemma 1]{OgburnandVanderWeele2012a}. There were no other runs that were monotone in $D$, as expected from Theorem \ref{the:theorem}. In all these 4891 runs, $RD_{obs}$ was between $RD_{true}$ and $RD_{crude}$, as expected from Corollary \ref{cor:corollary} and \citet[Result 1]{OgburnandVanderWeele2012a}. It is also worth noticing from the table that the 10000 runs are rather evenly distributed among the different entries. Finally, 4460 of the 5109 runs where the monotonicity assumption did not hold still resulted in that $RD_{obs}$ was between $RD_{true}$ and $RD_{crude}$. In other words, although half of the runs violated the monotonicity assumption, few of them resulted in $RD_{obs}$ being outside the range of $RD_{true}$ and $RD_{crude}$. In total, $RD_{obs}$ was between $RD_{true}$ and $RD_{crude}$ in 94 \% of the runs. Therefore, $RD_{obs}$ was a better approximation to $RD_{true}$ than $RD_{crude}$ in most of the runs. We investigate further this question in the next section, where we characterize a nonmonotonic case where $RD_{obs}$ still lies between $RD_{true}$ and $RD_{crude}$.

The plots in Figure \ref{fig:plots} show some additional descriptive statistics for the runs where $RD_{obs}$ belonged to the interval between $RD_{true}$ and $RD_{crude}$. The top left plot shows that most intervals were quite small and, thus, that $RD_{obs}$ was a good approximation to $RD_{true}$ in most cases. However, the top right plot shows that $RD_{obs}$ was typically closer to $RD_{crude}$ than to $RD_{true}$. The bottom left plot is a zoom of the previous plot at the smallest intervals. Finally, the bottom right plot shows that the lower the correlation between $C$ and $D$ when measured by the Youden index, the closer $RD_{obs}$ was to $RD_{crude}$. In summary, $RD_{obs}$ seems to be a good approximation to $RD_{true}$, but it seems to be biased towards $RD_{crude}$. This is a problem when the interval between $RD_{crude}$ and $RD_{true}$ is large. However, the length of the interval is unknown in practice, and we doubt substantive knowledge may provide hints on it. The bias seems to decrease with increasing correlation between $C$ and $D$. Although this correlation is unknown in practice, substantive knowledge may give hints on it.

\subsection{Nonmonotonicity}

Consider the causal graph to the left in Figure \ref{fig:graphs}. This section characterizes a case where $E[Y|A,C]$ is not monotone in $C$ and, thus, $E[Y|A,D]$ is not monotone in $D$ by Theorem \ref{the:theorem}, and yet $RD_{obs}$ lies between $RD_{true}$ and $RD_{crude}$. Specifically, let $A$, $D$ and $Y$ represent three diseases, and $C$ a risk factor for the three of them. Suppose that suffering $A$ affects the risk of suffering $Y$.  Suppose that half of the population is exposed to the risk factor $C$, i.e. $p(c)=0.5$. Suppose also that the exposure to $C$ affects the risk of suffering $A$ and $D$ as $p(a|c)=p(\na|\nc)=p(d|c)=p(\nd|\nc) \geq 0.5$. Finally, suppose that $E[Y|a,c] - E[Y|a,\nc] \geq 0$ and $E[Y|\na,\nc] - E[Y|\na,c] \geq 0$. In other words, the exposure to $C$ increases the average severity of the disease $Y$ for the individuals suffering the disease $A$, while it decreases the severity for the rest. Therefore, the monotonicity assumption does hold. However, under the additional assumption that $E[Y|a,c] - E[Y|a,\nc] \geq E[Y|\na,\nc] - E[Y|\na,c]$, we can still conclude that $RD_{obs}$ lies between $RD_{true}$ and $RD_{crude}$. Note that one has to rely on substantive knowledge to verify the conditions in the characterization, because $C$ is unobserved. The following theorems formalize this result.

\begin{theorem}\label{the:nonmonotone}
Consider the causal graph to the left in Figure \ref{fig:graphs}. Let $p(c)=0.5$ and $p(a|c)=p(\na|\nc)=p(d|c)=p(\nd|\nc) \geq 0.5$. If $E[Y|a,c] - E[Y|a,\nc] \geq E[Y|\na,\nc] - E[Y|\na,c] \geq 0$, then $RD_{crude} \geq RD_{obs} \geq RD_{true}$.
\end{theorem}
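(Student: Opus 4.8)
The plan is to exploit the symmetric parameterization to collapse all three risk differences into a common shape and then compare them coefficient by coefficient. Write $\alpha$ for the common value $p(a|c)=p(\na|\nc)=p(d|c)=p(\nd|\nc)\geq\tfrac12$. First I would note that $p(c)=p(\nc)=\tfrac12$ together with this choice forces $p(a)=p(\na)=p(d)=p(\nd)=\tfrac12$ and, by Bayes' rule, $p(c|a)=\alpha$ and $p(c|\na)=1-\alpha$. Using $A \ci D \mid C$, the joint likelihoods factor as $p(a,d|c)=\alpha^2$ and $p(a,d|\nc)=(1-\alpha)^2$, while both mixed likelihoods equal $\alpha(1-\alpha)$; hence the posteriors collapse to
\[
p(c|a,d)=\beta,\quad p(c|\na,\nd)=1-\beta,\quad p(c|a,\nd)=p(c|\na,d)=\tfrac12,
\]
where $\beta=\alpha^2/\bigl(\alpha^2+(1-\alpha)^2\bigr)$ satisfies $\beta\geq\tfrac12$ because $\alpha\geq\tfrac12$.

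Next I would invoke $Y \ci D \mid A,C$ to rewrite every $E[Y|A,D]$ as the matching posterior mixture of $E[Y|a,c]$ and $E[Y|a,\nc]$ (and likewise for $\na$), and abbreviate $u=E[Y|a,c]-E[Y|a,\nc]$ and $v=E[Y|\na,\nc]-E[Y|\na,c]$, so that the hypothesis reads $u\geq v\geq 0$. Substituting the posteriors above and simplifying, the two differences telescope into the clean identities
\begin{align*}
RD_{crude}-RD_{true} &= \bigl(\alpha-\tfrac12\bigr)(u-v),\\
RD_{obs}-RD_{true} &= \tfrac12\bigl(\beta-\tfrac12\bigr)(u-v).
\end{align*}
Since $\alpha\geq\tfrac12$, $\beta\geq\tfrac12$ and $u\geq v$, both right-hand sides are nonnegative, which already yields $RD_{crude}\geq RD_{true}$ and $RD_{obs}\geq RD_{true}$.

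It remains to order $RD_{crude}$ above $RD_{obs}$, and this is where the real work lies. Subtracting the two identities gives
\[
RD_{crude}-RD_{obs}=(u-v)\bigl(\alpha-\tfrac14-\tfrac12\beta\bigr),
\]
so with $u\geq v$ the whole theorem reduces to the single-variable inequality $\alpha-\tfrac14-\tfrac12\beta\geq 0$ on $[\tfrac12,1]$. Clearing the positive denominator $\alpha^2+(1-\alpha)^2$ turns this into the claim that the cubic $2\alpha^3-3\alpha^2+\tfrac32\alpha-\tfrac14$ is nonnegative. The main obstacle is recognizing that four times this cubic is exactly $(2\alpha-1)^3$, after which nonnegativity on $[\tfrac12,1]$ is immediate; the remainder is routine bookkeeping of the coefficient identities that factor out $(u-v)$. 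At the boundary $\alpha=\tfrac12$ the factor $(2\alpha-1)^3$ vanishes and all three risk differences coincide, consistent with the absence of confounding there.
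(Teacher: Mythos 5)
Your proof is correct---I verified the posterior collapse, both identities, and the factorization $8\alpha^3-12\alpha^2+6\alpha-1=(2\alpha-1)^3$---but for the harder half it takes a genuinely different route from the paper. For $RD_{obs}\geq RD_{true}$ you and the paper do essentially the same thing: the paper sets $\alpha_p=p(c|a,d)-\tfrac12$ and derives $RD_{obs}=RD_{true}+\tfrac{\alpha_p}{2}(u-v)$, which is exactly your second identity. For $RD_{crude}\geq RD_{obs}$, however, the paper never reduces to a polynomial inequality in a single parameter: it conditions through $D$ instead, first proving (its Fact 2) that the hypothesis transfers to the proxy level, $E[Y|a,d]-E[Y|a,\nd]\geq E[Y|\na,\nd]-E[Y|\na,d]\geq 0$, and separately (its Fact 3, via the map $x\mapsto 2x(1-x)$) that $p(d|a)\geq\tfrac12$, whence
\[
RD_{crude}=RD_{obs}+\Bigl(p(d|a)-\tfrac12\Bigr)\Bigl[\bigl(E[Y|a,d]-E[Y|a,\nd]\bigr)-\bigl(E[Y|\na,\nd]-E[Y|\na,d]\bigr)\Bigr]\geq RD_{obs}.
\]
In your parameterization $p(d|a)-\tfrac12=\tfrac12(2\alpha-1)^2$ and $p(c|a,d)-\tfrac12=\frac{2\alpha-1}{2(\alpha^2+(1-\alpha)^2)}$, so the paper's two-stage decomposition yields your coefficient $\frac{(2\alpha-1)^3}{4(\alpha^2+(1-\alpha)^2)}$ already split as $(2\alpha-1)^2\cdot(2\alpha-1)$---it never has to spot the cubic identity you flag as the main obstacle, which makes its argument more robust to by-hand verification. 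What your direct computation buys in exchange is sharper information: the closed-form identities exhibit all three gaps as the single quantity $u-v$ times an explicit nonnegative coefficient, which shows that only $u\geq v$ is actually used (the hypothesis $v\geq 0$ is superfluous under this symmetric parameterization) and makes quantitative how the ordering collapses to equality as $\alpha\to\tfrac12$, where confounding vanishes.
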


\begin{proof}
We start by proving some auxiliary facts.

{\bf Fact 1}. Recall from the proof of Theorem \ref{the:theorem} that
\[
p(c|a,d) = \sigma \Big( \ln \frac{p(a,d|c)p(c)}{p(a,d|\nc)p(\nc)} \Big) = \sigma \Big( \ln \frac{p(a|c)p(d|c)}{p(a|\nc)p(d|\nc)} \Big)
\]
where the last equality follows from the assumption that $p(c)=0.5$, and the fact that $A$ and $D$ are conditionally independent given $C$ due to the causal graph under consideration. Note that the previous equation implies that $p(c|a,d)=p(\nc|\na,\nd) \geq 0.5$ and $p(c|\na,d)=p(c|a,\nd)=0.5$, due to the assumption that $p(a|c)=p(\na|\nc)=p(d|c)=p(\nd|\nc) \geq 0.5$.

{\bf Fact 2}. Note that
\begin{align*}
E[Y|a,d] - E[Y|a,\nd] &= E[Y|a,c,d] p(c|a,d) + E[Y|a,\nc,d] p(\nc|a,d)\\
&- E[Y|a,c,\nd] p(c|a,\nd) - E[Y|a,\nc,\nd] p(\nc|a,\nd)\\
&= E[Y|a,c] (p(c|a,d) - 0.5) + E[Y|a,\nc] (p(\nc|a,d) - 0.5)\\
&= E[Y|a,c] (p(c|a,d) - 0.5) - E[Y|a,\nc] (p(c|a,d) - 0.5)
\end{align*}
where the second equality follows from the fact that $Y$ and $D$ are conditionally independent given $A$ and $C$ due to the causal graph under consideration, and the fact that $p(c|\na,d)=p(c|a,\nd)=0.5$ by Fact 1. Likewise,
\begin{align*}
E[Y|\na,\nd] - E[Y|\na,d] &= E[Y|\na,c,\nd] p(c|\na,\nd) + E[Y|\na,\nc,\nd] p(\nc|\na,\nd)\\
&- E[Y|\na,c,d] p(c|\na,d) - E[Y|\na,\nc,d] p(\nc|\na,d)\\
&= E[Y|\na,c] (p(c|\na,\nd) - 0.5) + E[Y|\na,\nc] (p(\nc|\na,\nd) - 0.5)\\
&= - E[Y|\na,c] (p(c|a,d) - 0.5) + E[Y|\na,\nc] (p(c|a,d) - 0.5)
\end{align*}
since $p(c|a,d)=p(\nc|\na,\nd)$ by Fact 1. Then, the assumption that $E[Y|a,c] - E[Y|a,\nc] \geq E[Y|\na,\nc] - E[Y|\na,c] \geq 0$ together with the fact that $p(c|a,d) \geq 0.5$ by Fact 1 imply that $E[Y|a,d] - E[Y|a,\nd] \geq E[Y|\na,\nd] - E[Y|\na,d] \geq 0$.

{\bf Fact 3}. Note that
\[
p(d)=p(d|c)p(c)+p(d|\nc)p(\nc)=p(\nd|\nc)p(\nc)+p(\nd|c)p(c)=p(\nd)
\]
by the assumptions that $p(c)=0.5$ and $p(d|c)=p(\nd|\nc)$. Then, $p(d)=0.5$ and, thus, $p(c|d)=p(d|c)=p(\nd|\nc)=p(\nc|\nd)$. Likewise, $p(a)=0.5$ and $p(d|a)=p(a|d)=p(\na|\nd)=p(\nd|\na)$. Note also that
\[
p(\na|d)=p(\na|c,d)p(c|d)+p(\na|\nc,d)p(\nc|d)=p(\na|c)p(c|d)+p(a|c)p(\nc|d)
\]
since $A$ and $D$ are conditionally independent given $C$ due to the causal graph under consideration, and $p(a|c)=p(\na|\nc)$ by assumption. Moreover, the previous equation can be rewritten as
\[
p(\na|d)=2 p(a|c) ( 1- p(a|c) )
\]
because $p(d|c)=p(c|d)$ as shown above, whereas $p(a|c) = p(d|c)$ by assumption. The last equation implies that $p(\na|d) \ngtr 0.5$. To see it, rewrite the last equation as the function $f(x)=2x(1-x)$. By inspecting the first and second derivatives, we can conclude that $f(x)$ has a single maximum at $x=0.5$ with value 0.5. That $p(\na|d) \ngtr 0.5$ implies that $p(d|a) = p(a|d) \geq 0.5$.

We now prove the theorem. Note that the assumption that $p(c)=0.5$ implies that
\begin{equation}\label{eq:rdtrue}
RD_{true} = ( E[Y|a,c] + E[Y|a,\nc] - E[Y|\na,c] - E[Y|\na,\nc] )/2.
\end{equation}
Note also that $p(d)=0.5$ by Fact 3 and, thus,
\begin{align}\label{eq:rdobs}
RD_{obs} &= ( E[Y|a,d] + E[Y|a,\nd] - E[Y|\na,d] - E[Y|\na,\nd] )/2\\\nonumber
&= \big( E[Y|a,c,d]p(c|a,d) + E[Y|a,\nc,d]p(\nc|a,d)\\\nonumber
&+ E[Y|a,c,\nd]p(c|a,\nd) + E[Y|a,\nc,\nd]p(\nc|a,\nd)\\\nonumber
&- E[Y|\na,c,d]p(c|\na,d) - E[Y|\na,\nc,d]p(\nc|\na,d)\\\nonumber
&- E[Y|\na,c,\nd]p(c|\na,\nd) - E[Y|\na,\nc,\nd]p(\nc|\na,\nd) \big)/2\\\nonumber
& = \big( E[Y|a,c] (p(c|a,d) + p(c|a,\nd))\\\nonumber
& + E[Y|a,\nc] (p(\nc|a,d) + p(\nc|a,\nd))\\\nonumber
& - E[Y|\na,c] (p(c|\na,d) + p(c|\na,\nd))\\\nonumber
& - E[Y|\na,\nc] (p(\nc|\na,d) + p(\nc|\na,\nd)) \big)/2\nonumber
\end{align}
since $Y$ and $D$ are conditionally independent given $A$ and $C$ due to the causal graph under consideration. Note that $p(c|a,d)=p(\nc|\na,\nd)$ and $p(c|\na,d)=p(c|a,\nd)=0.5$ by Fact 1. Then, the previous equation can be rewritten as follows with $\alpha = p(c|a,d) - 0.5$:
\begin{align}\nonumber
RD_{obs} &= \big( E[Y|a,c] (p(c|a,d) + 0.5) + E[Y|a,\nc] (p(\nc|a,d) + 0.5)\\\nonumber
& - E[Y|\na,c] (0.5 + p(\nc|a,d)) - E[Y|\na,\nc] (0.5 + p(c|a,d)) \big)/2\\\nonumber
&= \big( E[Y|a,c] (1 + \alpha) + E[Y|a,\nc] (1 - \alpha)\\\nonumber
& - E[Y|\na,c] (1 - \alpha) - E[Y|\na,\nc] (1 + \alpha) \big)/2\\\nonumber
&= ( E[Y|a,c] + E[Y|a,\nc] - E[Y|\na,c] - E[Y|\na,\nc] ) / 2\\\nonumber
& + ( E[Y|a,c] \alpha - E[Y|a,\nc] \alpha + E[Y|\na,c] \alpha - E[Y|\na,\nc] \alpha )/2\\\nonumber
&\geq RD_{true}\nonumber
\end{align}
by Equation \ref{eq:rdtrue} and the fact that the term in penultimate line above is nonnegative. The latter follows from the assumption that $E[Y|a,c] - E[Y|a,\nc] \geq E[Y|\na,\nc] - E[Y|\na,c] \geq 0$, and the fact that $\alpha = p(c|a,d) - 0.5 \geq 0$ by Fact 1.

Having proven that $RD_{obs} \geq RD_{true}$, it only remains to prove that $RD_{crude} \geq RD_{obs}$. Let $\beta=p(d|a) - 0.5$. Note that $p(d|a)=p(\nd|\na)$ by Fact 3. Then,
\begin{align}\nonumber
RD_{crude} &= E[Y|a] - E[Y|\na]\\\nonumber
&= E[Y|a,d] p(d|a) + E[Y|a,\nd] p(\nd|a)\\\nonumber
&- E[Y|\na,d] p(d|\na) - E[Y|\na,\nd] p(\nd|\na)\\\nonumber
&= E[Y|a,d] (0.5 + \beta) + E[Y|a,\nd] (0.5 - \beta)\\\nonumber
&- E[Y|\na,d] (0.5 - \beta) - E[Y|\na,\nd] (0.5 + \beta)\\\nonumber
&= E[Y|a,d] 0.5 + E[Y|a,\nd] 0.5 - E[Y|\na,d] 0.5 - E[Y|\na,\nd] 0.5\\\nonumber
&+ E[Y|a,d] \beta - E[Y|a,\nd] \beta + E[Y|\na,d] \beta - E[Y|\na,\nd] \beta\\\nonumber
&\geq RD_{obs}\nonumber
\end{align}
by Equation \ref{eq:rdobs} and the fact that the term in the penultimate line above is nonnegative. The latter follows from the fact that $E[Y|a,d] - E[Y|a,\nd] \geq E[Y|\na,\nd] - E[Y|\na,d] \geq 0$ by Fact 2, and the fact that $\beta = p(d|a) - 0.5 \geq 0$ by Fact 3.
\end{proof}

\begin{theorem}
Consider the causal graph to the left in Figure \ref{fig:graphs}. Let $p(c)=0.5$ and $p(a|c)=p(\na|\nc)=p(d|c)=p(\nd|\nc) \geq 0.5$. If $E[Y|a,c] - E[Y|a,\nc] \leq E[Y|\na,\nc] - E[Y|\na,c] \leq 0$, then $RD_{crude} \leq RD_{obs} \leq RD_{true}$.
\end{theorem}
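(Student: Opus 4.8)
The plan is to observe that this statement is the exact sign-mirror of Theorem \ref{the:nonmonotone}, and to reduce it to that theorem by the substitution $Y \mapsto -Y$ rather than redoing the computation. First I would check that $-Y$ is a legitimate outcome for the same causal graph: since $A$, $C$ and $D$ are binary and unchanged, and the factorization in Equation \ref{eq:factorization} together with the conditional independence $Y \ci D \mid A, C$ are preserved under any deterministic relabeling of the outcome, the graph to the left in Figure \ref{fig:graphs} holds verbatim with $-Y$ in place of $Y$. The parameterization hypotheses $p(c)=0.5$ and $p(a|c)=p(\na|\nc)=p(d|c)=p(\nd|\nc) \geq 0.5$ make no reference to the outcome, so they carry over unchanged.

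Next I would translate the hypotheses and conclusion. Writing $E[-Y|\cdot]=-E[Y|\cdot]$ and multiplying the assumed chain $E[Y|a,c]-E[Y|a,\nc] \leq E[Y|\na,\nc]-E[Y|\na,c] \leq 0$ by $-1$ reverses every inequality, yielding
\[
E[-Y|a,c]-E[-Y|a,\nc] \geq E[-Y|\na,\nc]-E[-Y|\na,c] \geq 0,
\]
which is precisely the hypothesis of Theorem \ref{the:nonmonotone} for the outcome $-Y$. Applying that theorem gives $RD_{crude}^{-Y} \geq RD_{obs}^{-Y} \geq RD_{true}^{-Y}$. Because each of $RD_{true}$, $RD_{obs}$ and $RD_{crude}$ is linear in the conditional expectations of the outcome, replacing $Y$ by $-Y$ simply negates all three, so the chain becomes $-RD_{crude} \geq -RD_{obs} \geq -RD_{true}$, i.e. $RD_{crude} \leq RD_{obs} \leq RD_{true}$, as desired.

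Should one prefer a self-contained argument, the alternative is to repeat the proof of Theorem \ref{the:nonmonotone} with all inequalities reversed. Here Facts 1 and 3 depend only on the parameterization and not on the outcome, so they transfer verbatim; in particular the sigmoid monotonicity and the unimodality argument via $f(x)=2x(1-x)$ giving $p(\na|d) \ngtr 0.5$ are sign-independent and need no change. Only Fact 2 and the two final inequality chains are affected: the reversed sign condition turns the conclusion of Fact 2 into $E[Y|a,d]-E[Y|a,\nd] \leq E[Y|\na,\nd]-E[Y|\na,d] \leq 0$, and the same algebra, now with $\alpha \geq 0$ multiplying a nonpositive difference and $\beta \geq 0$ multiplying a nonpositive difference, flips the two conclusions to $RD_{obs} \leq RD_{true}$ and $RD_{crude} \leq RD_{obs}$. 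I expect no genuine obstacle: the only point requiring care is confirming that the transformation preserves the graph and that the chain of hypotheses maps exactly onto Theorem \ref{the:nonmonotone}, since the sign bookkeeping is otherwise the sole content of the argument.
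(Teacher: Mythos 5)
Your proposal is correct, and its main route differs from the paper's: the paper dispatches this theorem with a one-line proof (``The proof is analogous to that of the previous theorem''), i.e.\ exactly the self-contained sign-flipped rerun you sketch as your fallback. Your primary argument --- reducing to Theorem \ref{the:nonmonotone} via the substitution $Y \mapsto -Y$ --- is a genuine reduction rather than an analogy, and it is arguably the cleaner formalization: the factorization in Equation \ref{eq:factorization} and the independence $Y \ci D \mid A,C$ are indeed preserved under negating the outcome, the parameterization hypotheses $p(c)=0.5$ and $p(a|c)=p(\na|\nc)=p(d|c)=p(\nd|\nc)\geq 0.5$ make no reference to $Y$, and all three risk differences are linear in the conditional expectations of the outcome (for $RD_{true}$ one should note $(-Y)_a=-Y_a$, or equivalently work from the adjustment-formula expression the paper uses, so that $RD_{true}$ also negates --- a point worth one sentence but not a gap). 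What the reduction buys is that nothing from Facts 1--3 needs to be re-derived or sign-checked; what the paper's analogy route buys is nothing beyond brevity, and it implicitly asks the reader to do the bookkeeping you make explicit. Your observation that Facts 1 and 3 are outcome-independent while only Fact 2 and the two final chains flip is exactly the content the paper's ``analogous'' proof relies on, so both routes are fully validated by your write-up.
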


\begin{proof}
The proof is analogous to that of the previous theorem.
\end{proof}

Consider now replacing the assumption that $p(a|c)=p(\na|\nc) \geq 0.5$ in Theorem \ref{the:nonmonotone} by the weaker assumption that $p(\na|\nc) \geq p(a|c) \geq 0.5$. Then, $RD_{crude} \geq RD_{obs}$ does not always hold. Our experiments suggest that it holds for approximately 90 \% of the parameterizations.\footnote{Code available at \texttt{https://www.dropbox.com/s/6dy2mr067ukc7zq/nonmonotonicity2.R?dl=0}.} However, $RD_{crude} \geq RD_{true}$ and $RD_{obs} \geq RD_{true}$ always hold, as the following theorem proves. This result is useful when, for instance, $0 > RD_{crude}$ or $0 > RD_{obs}$ because, then, one can readily conclude that $0 > RD_{obs}$, i.e. suffering the disease $A$ reduces the average severity of the disease $Y$.

\begin{theorem}
Consider the causal graph to the left in Figure \ref{fig:graphs}. Let $p(c)=0.5$, $p(d|c)=p(\nd|\nc) \geq 0.5$ and $p(\na|\nc) \geq p(a|c) \geq 0.5$. If $E[Y|a,c] - E[Y|a,\nc] \geq E[Y|\na,\nc] - E[Y|\na,c] \geq 0$, then $RD_{crude} \geq RD_{true}$ and $RD_{obs} \geq RD_{true}$.
\end{theorem}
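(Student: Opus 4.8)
The plan is to mirror the standardization argument of Theorem \ref{the:nonmonotone}, replacing its exact cancellations (which relied on $p(a|c)=p(\na|\nc)$) by a monotonicity argument. First I would record that the weaker hypotheses still force $p(d)=0.5$: this uses only $p(c)=0.5$ and $p(d|c)=p(\nd|\nc)$, exactly as in Fact 3, so both $RD_{true}$ and $RD_{obs}$ remain half-sum standardizations, with $2RD_{true}=E[Y|a,c]+E[Y|a,\nc]-E[Y|\na,c]-E[Y|\na,\nc]$ and $2RD_{obs}=E[Y|a,d]+E[Y|a,\nd]-E[Y|\na,d]-E[Y|\na,\nd]$. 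Expanding the latter using $Y\ci D\mid A,C$ and collecting terms, I expect both differences to take the same shape: with $\Delta_1=E[Y|a,c]-E[Y|a,\nc]$ and $\Delta_2=E[Y|\na,\nc]-E[Y|\na,c]$ (so the hypothesis reads $\Delta_1\ge\Delta_2\ge0$), and for suitable scalars $L^+,L^-$ and a weighting function $\psi$ made precise below,
\[
RD_{crude}-RD_{true}=\Big(\psi(L^+)-\tfrac12\Big)\Delta_1+\Big(\psi(L^-)-\tfrac12\Big)\Delta_2 ,
\]
together with the analogous identity for $RD_{obs}-RD_{true}$ with a different $\psi$.

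The engine is a logit reparametrization. Setting $L^+=\ln\frac{p(a|c)}{p(a|\nc)}$ and $L^-=\ln\frac{p(\na|c)}{p(\na|\nc)}$ and using $p(c)=0.5$, one has $p(c|a)=\sigma(L^+)$ and $p(c|\na)=\sigma(L^-)$, so the crude identity holds with $\psi=\sigma$. For the observed difference, additivity of log odds with $A\ci D\mid C$ (as in Fact 1) gives $p(c|a,d)=\sigma(L^++t)$, $p(c|a,\nd)=\sigma(L^+-t)$, $p(c|\na,d)=\sigma(L^-+t)$ and $p(c|\na,\nd)=\sigma(L^--t)$, where $t=\ln\frac{p(d|c)}{p(d|\nc)}$; hence the weights of $\Delta_1,\Delta_2$ are governed by $h(L^+)$ and $h(L^-)$ with $h(x)=\sigma(x+t)+\sigma(x-t)$, and the observed identity holds with $\psi=\tfrac12 h$. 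The key observation is that both weighting functions are nondecreasing and satisfy $\psi(x)+\psi(-x)=1$ (for $\tfrac12 h$ this follows from $h(x)+h(-x)=2$, which is immediate from $\sigma(x)+\sigma(-x)=1$), so in particular $\psi(0)=\tfrac12$.

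Both inequalities then reduce to one lemma: if $\psi$ is nondecreasing with $\psi(x)+\psi(-x)=1$, if $L^+\ge0$ and $L^++L^-\ge0$, and if $\Delta_1\ge\Delta_2\ge0$, then $(\psi(L^+)-\tfrac12)\Delta_1+(\psi(L^-)-\tfrac12)\Delta_2\ge0$. I would prove it by noting that the leading coefficient is nonnegative (since $L^+\ge0$ and $\psi$ is nondecreasing through $\psi(0)=\tfrac12$), while the sum of the two coefficients equals $\psi(L^+)+\psi(L^-)-1\ge\psi(L^+)+\psi(-L^+)-1=0$, using $L^-\ge-L^+$ and monotonicity; then $(\psi(L^+)-\tfrac12)\Delta_1+(\psi(L^-)-\tfrac12)\Delta_2\ge(\psi(L^+)+\psi(L^-)-1)\Delta_2\ge0$, where the first step uses $\Delta_1\ge\Delta_2$ and nonnegativity of the leading coefficient. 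Here $\Delta_1\ge\Delta_2$ is essential, which is exactly the extra hypothesis $E[Y|a,c]-E[Y|a,\nc]\ge E[Y|\na,\nc]-E[Y|\na,c]$.

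It remains to verify the two log-odds conditions, and this is where $p(a|c)\ge\tfrac12$ and $p(\na|\nc)\ge p(a|c)$ enter. The bound $L^+\ge0$ is immediate from $p(a|c)\ge\tfrac12\ge p(a|\nc)$. The bound $L^++L^-\ge0$ rewrites as $p(a|c)p(\na|c)\ge p(a|\nc)p(\na|\nc)$, i.e.\ $x(1-x)$ evaluated at $p(a|c)$ is at least its value at $p(\na|\nc)$; since $x(1-x)$ is decreasing on $[\tfrac12,1]$ and $\tfrac12\le p(a|c)\le p(\na|\nc)$, this holds. I expect the main obstacle to be the observed case, precisely because the two posteriors $h(L^+)$ and $h(L^-)$ straddle $1$ in opposite directions once $p(a|c)\ne p(\na|\nc)$, so no termwise comparison is available; the symmetry $h(x)+h(-x)=2$ together with monotonicity of $h$ is what tames it, collapsing the whole question to the scalar inequality $L^++L^-\ge0$. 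Degenerate parameter values that send a logit to $\pm\infty$ (for instance $p(a|\nc)=0$ or $p(d|c)\in\{0,1\}$) are absorbed by taking the evident limits of $\sigma$.
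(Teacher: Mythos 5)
Your proposal is correct and is essentially the paper's own proof in a unified packaging: both arguments run through the same sigmoid/log-odds representation of the posteriors, the same decomposition of $RD_{crude}-RD_{true}$ and $RD_{obs}-RD_{true}$ as weighted combinations of $\Delta_1$ and $\Delta_2$ (your $\psi(L^+)-\tfrac{1}{2}$ and $\psi(L^-)-\tfrac{1}{2}$ are exactly the paper's $\alpha$ and $-\beta$), and the same pivotal inequality $L^{+}+L^{-}\geq 0$, which is precisely Equation \ref{eq:ffunction}, established by the same $f(x)=x(1-x)$ argument. Your abstract lemma for nondecreasing $\psi$ with $\psi(x)+\psi(-x)=1$ merely condenses the paper's two separate verifications into one, and your aside that no termwise comparison is available in the observed case is not quite right: the paper does compare termwise at matched shifts, namely $p(c|a,d)\geq p(\nc|\na,\nd)$ (shift $+t$ on both sides) and $p(c|a,\nd)\geq p(\nc|\na,d)$ (shift $-t$), which is exactly your monotonicity-of-$h$ step unwound.
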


\begin{proof}
We start by proving that $RD_{crude} \geq RD_{true}$. Recall from the proof of Theorem \ref{the:theorem} that
\[
p(c|a) = \sigma \Big( \ln \frac{p(a|c)p(c)}{p(a|\nc)p(\nc)} \Big) = \sigma \Big( \ln \frac{p(a|c)}{p(a|\nc)} \Big)
\]
where the second equality follows from the assumption that $p(c)=0.5$. Likewise, 
\[
p(\nc|\na) = \sigma \Big( \ln \frac{p(\na|\nc)}{p(\na|c)} \Big).
\]
Note also that $p(c|a) \geq 0.5$ and $p(\nc|\na) \geq 0.5$ due to the assumption that $p(\na|\nc) \geq p(a|c) \geq 0.5$. Now, consider the function $f(x)=x(1-x)$. By inspecting the first and second derivatives, we can conclude that $f(x)$ has a single maximum at $x=0.5$, and that it is increasing in the interval $[0,0.5]$ and decreasing in the interval $[0.5,1]$. This implies that $p(a|c)p(\na|c) \geq p(a|\nc)p(\na|\nc)$ due to the assumption that $p(\na|\nc) \geq p(a|c) \geq 0.5$. Then,
\begin{equation}\label{eq:ffunction}
\frac{p(a|c)}{p(a|\nc)} \geq \frac{p(\na|\nc)}{p(\na|c)}
\end{equation}
which together with the fact that $\sigma()$ and $\ln()$ are increasing functions imply that $p(c|a) \geq p(\nc|\na)$.

The results in the previous paragraph allow us to write $p(c|a)=0.5+\alpha$ and $p(\nc|\na)=0.5+\beta$ with $\alpha \geq \beta \geq 0$. Therefore,
\begin{align*}
RD_{crude} &= E[Y|a] - E[Y|\na]\\
&= E[Y|a,c] p(c|a) + E[Y|a,\nc] p(\nc|a)\\
&- E[Y|\na,c] p(c|\na) - E[Y|\na,\nc] p(\nc|\na)\\
&= E[Y|a,c] (0.5 + \alpha) + E[Y|a,\nc] (0.5 - \alpha)\\
&- E[Y|\na,c] (0.5 - \beta) - E[Y|\na,\nc] (0.5 + \beta)\\
&= RD_{true} + \alpha \big( E[Y|a,c] - E[Y|a,\nc] \big) - \beta \big( E[Y|\na,\nc] - E[Y|\na,c] \big)\\
& \geq RD_{true}
\end{align*}
because $\alpha \geq \beta \geq 0$ as shown above, $E[Y|a,c] - E[Y|a,\nc] \geq E[Y|\na,\nc] - E[Y|\na,c] \geq 0$ by assumption, and
\begin{equation}\label{eq:rdtrue2}
RD_{true} = E[Y|a,c] 0.5 + E[Y|a,\nc] 0.5 - E[Y|\na,c] 0.5 - E[Y|\na,\nc] 0.5.
\end{equation}
due to the assumption that $p(c)=0.5$.

We continue by proving that $RD_{obs} \geq RD_{true}$. First, recall again from the proof of Theorem \ref{the:theorem} that
\[
p(c|a,d) = \sigma \Big( \ln \frac{p(a,d|c)p(c)}{p(a,d|\nc)p(\nc)} \Big) = \sigma \Big( \ln \frac{p(a|c)p(d|c)}{p(a|\nc)p(d|\nc)} \Big)
\]
where the second equality follows from the assumption that $p(c)=0.5$, and the fact that $A$ and $D$ are conditionally independent given $C$ due to the causal graph under consideration. Likewise,
\[
p(\nc|a,\nd) = \sigma \Big( \ln \frac{p(a|\nc)p(\nd|\nc)}{p(a|c)p(\nd|c)} \Big).
\]
Therefore, $p(c|a,d) \geq p(\nc|a,\nd)$ because $\sigma()$ and $\ln()$ are increasing functions and
\[
\frac{p(d|c)}{p(d|\nc)} = \frac{p(\nd|\nc)}{p(\nd|c)}
\]
by the assumption that $p(d|c)=p(\nd|\nc)$, and
\[
\frac{p(a|c)}{p(a|\nc)} \geq \frac{p(a|\nc)}{p(a|c)}
\]
by the assumption that $p(\na|\nc) \geq p(a|c) \geq 0.5$. We can analogously prove that $p(c|a,\nd) \geq p(\nc|a,d)$. Therefore, 
\[
p(c|a,d) p(d) + p(c|a,\nd) p(\nd) \geq p(\nc|a,d) p(d) + p(\nc|a,\nd) p(\nd)
\]
because
\[
p(d)=p(d|c)p(c)+p(d|\nc)p(\nc)=p(d|c)p(c)+p(\nd|c)p(\nc)=0.5
\]
by the assumptions that $p(c)=0.5$ and $p(d|c)=p(\nd|\nc)$. Now, note that
\[
p(\nc|a,d) p(d) + p(\nc|a,\nd) p(\nd) = 1 - ( p(c|a,d) p(d) + p(c|a,\nd) p(\nd) )
\]
which implies that $p(c|a,d) p(d) + p(c|a,\nd) p(\nd) \geq 0.5$. We can analogously prove that $p(\nc|\na,d) p(d) + p(\nc|\na,\nd) p(\nd) \geq 0.5$.

Then, consider the expression
\[
p(\nc|\na,\nd) = \sigma \Big( \ln \frac{p(\na|\nc)p(\nd|\nc)}{p(\na|c)p(\nd|c)} \Big).
\]
Therefore, $p(c|a,d) \geq p(\nc|\na,\nd)$ due to the following three observations. First,
\[
\frac{p(d|c)}{p(d|\nc)} = \frac{p(\nd|\nc)}{p(\nd|c)}
\]
by the assumption that $p(d|c)=p(\nd|\nc)$. Second,
\[
\frac{p(a|c)}{p(a|\nc)} \geq \frac{p(\na|\nc)}{p(\na|c)}
\]
as shown in Equation \ref{eq:ffunction}. Third, $\sigma()$ and $\ln()$ are increasing functions. We can analogously prove that $p(c|a,\nd) \geq p(\nc|\na,d)$. Therefore,
\[
p(c|a,d) p(d) + p(c|a,\nd) p(\nd) \geq p(\nc|\na,d) p(d) + p(\nc|\na,\nd) p(\nd)
\]
because $p(d)=0.5$ as shown above.

Finally, the results in the previous paragraphs allow us to write $p(c|a,d) p(d) + p(c|a,\nd) p(\nd) = 0.5 + \alpha$ and $p(\nc|\na,d) p(d) + p(\nc|\na,\nd) p(\nd) = 0.5 + \beta$ with $\alpha \geq \beta \geq 0$. Consequently, $p(\nc|a,d) p(d) + p(\nc|a,\nd) p(\nd) = 0.5 - \alpha$, and $p(c|\na,d) p(d) + p(c|\na,\nd) p(\nd) = 0.5 - \beta$. Therefore,
\begin{align*}
RD_{obs} &= E[Y|a,d] p(d) + E[Y|a,\nd] p(\nd) - E[Y|\na,d] p(d) - E[Y|\na,\nd] p(\nd)\\
&= \big( E[Y|a,c,d]p(c|a,d) + E[Y|a,\nc,d]p(\nc|a,d) \big) p(d)\\
&+ \big( E[Y|a,c,\nd]p(c|a,\nd) + E[Y|a,\nc,\nd]p(\nc|a,\nd) \big) p(\nd)\\
&- \big( E[Y|\na,c,d]p(c|\na,d) + E[Y|\na,\nc,d]p(\nc|\na,d) \big) p(d)\\
&- \big( E[Y|\na,c,\nd]p(c|\na,\nd) + E[Y|\na,\nc,\nd]p(\nc|\na,\nd) \big) p(\nd)\\
& = E[Y|a,c] (p(c|a,d) p(d) + p(c|a,\nd) p(\nd))\\
& + E[Y|a,\nc] (p(\nc|a,d) p(d) + p(\nc|a,\nd) p(\nd))\\
& - E[Y|\na,c] (p(c|\na,d) p(d) + p(c|\na,\nd) p(\nd))\\
& - E[Y|\na,\nc] (p(\nc|\na,d) p(d) + p(\nc|\na,\nd) p(\nd))\\
& = E[Y|a,c] (0.5 + \alpha) + E[Y|a,\nc] (0.5 - \alpha)\\
& - E[Y|\na,c] (0.5 - \beta) - E[Y|\na,\nc] (0.5 + \beta)
\end{align*}
where the third equality follows from the fact that $Y$ and $D$ are conditionally independent given $A$ and $C$ due to the causal graph under consideration. Then,
\begin{align*}
RD_{obs} &= RD_{true} + \alpha \big( E[Y|a,c] - E[Y|a,\nc] \big) - \beta \big( E[Y|\na,\nc] - E[Y|\na,c] \big)\\
& \geq RD_{true}
\end{align*}
by Equation \ref{eq:rdtrue2}, the above shown fact that $\alpha \geq \beta \geq 0$, and the assumption that $E[Y|a,c] - E[Y|a,\nc] \geq E[Y|\na,\nc] - E[Y|\na,c] \geq 0$.
\end{proof}

One can analogously prove the following result.

\begin{theorem}
Consider the causal graph to the left in Figure \ref{fig:graphs}. Let $p(c)=0.5$, $p(d|c)=p(\nd|\nc) \geq 0.5$ and $p(\na|\nc) \geq p(a|c) \geq 0.5$. If $E[Y|a,c] - E[Y|a,\nc] \leq E[Y|\na,\nc] - E[Y|\na,c] \leq 0$, then $RD_{crude} \leq RD_{true}$ and $RD_{obs} \leq RD_{true}$.
\end{theorem}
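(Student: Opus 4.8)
The plan is to avoid repeating the computation and instead reduce the statement to the preceding theorem by a sign-flipping symmetry. Observe first that the three hypotheses that do not involve $Y$ — namely $p(c)=0.5$, $p(d|c)=p(\nd|\nc) \geq 0.5$, and $p(\na|\nc) \geq p(a|c) \geq 0.5$ — are word-for-word the hypotheses of the previous theorem. The only change is that the ordering of the conditional-mean contrasts has been reversed together with the sign of the conclusions, and the idea is that reflecting the outcome absorbs exactly this reversal.

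First I would introduce the reflected outcome $Y' = -Y$. Since $Y'$ is a deterministic function of $Y$, the conditional independence $Y' \ci D \mid A,C$ inherited from the graph still holds, and the marginal and conditional laws of $A$, $C$ and $D$ are untouched; hence every probability that appears in the previous proof — in particular $p(c)$, $p(d)$, and $p(c \mid a,d)$ — is unchanged, so the $A,C,D$-hypotheses transfer verbatim to the problem posed for $Y'$. Next I would check that the outcome hypothesis transforms correctly: because $E[Y' \mid A,C] = -E[Y \mid A,C]$, multiplying $E[Y|a,c] - E[Y|a,\nc] \leq E[Y|\na,\nc] - E[Y|\na,c] \leq 0$ through by $-1$ and reversing the inequalities yields $E[Y'|a,c] - E[Y'|a,\nc] \geq E[Y'|\na,\nc] - E[Y'|\na,c] \geq 0$, which is precisely the outcome hypothesis of the previous theorem applied to $Y'$.

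Applying that theorem to $Y'$ then gives $RD_{crude}' \geq RD_{true}'$ and $RD_{obs}' \geq RD_{true}'$, where primes denote the risk differences computed from $Y'$. Since each of $RD_{crude}$, $RD_{true}$ and $RD_{obs}$ is a signed sum of conditional means of $Y$ with coefficients depending only on the law of $(A,C,D)$, replacing $Y$ by $-Y$ simply negates each one, so $RD_{crude}' = -RD_{crude}$, $RD_{true}' = -RD_{true}$, and $RD_{obs}' = -RD_{obs}$. Substituting these identities into the two inequalities and dividing by $-1$ reverses them, producing $RD_{crude} \leq RD_{true}$ and $RD_{obs} \leq RD_{true}$, as required.

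The step I would single out as requiring the most care is the invariance claim of the second paragraph: one must verify that none of the probabilities in the earlier argument depends on the outcome, so that the whole proof is genuinely a fixed linear functional of the conditional means with the law of $(A,C,D)$ held constant, and that the reflection preserves the graphical conditional independence $Y \ci D \mid A,C$. Once this is granted, the map $Y \mapsto -Y$ converts the previous theorem into the present one with no further calculation.
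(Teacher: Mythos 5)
Your proof is correct, and it takes a genuinely different (if closely related) route from the paper: the paper's entire proof is the sentence ``one can analogously prove the following result,'' i.e.\ it asks the reader to rerun the whole computation of the preceding theorem (the sigmoid/log-odds comparisons establishing $p(c|a) \geq p(\nc|\na)$ and $p(c|a,d)p(d)+p(c|a,\nd)p(\nd) \geq p(\nc|\na,d)p(d)+p(\nc|\na,\nd)p(\nd)$, then the decompositions of $RD_{crude}$ and $RD_{obs}$) with every inequality involving $Y$ reversed. You instead derive the theorem as a formal corollary of the previous one via the reflection $Y \mapsto -Y$, and the two invariances you single out as the crux are exactly the right ones and both hold: every probability used in the earlier argument ($p(c)$, $p(d)$, $p(c|a)$, $p(c|a,d)$, etc.) depends only on the law of $(A,C,D)$, which the reflection leaves untouched, and $-Y \ci D \mid A,C$ holds because $-Y$ is a measurable function of $Y$, so the reflected distribution still satisfies the factorization in Equation \ref{eq:factorization} and the quantities $\alpha \geq \beta \geq 0$ from the previous proof are literally unchanged. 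Your sign computation on the hypothesis is also right: negating $E[Y|a,c]-E[Y|a,\nc] \leq E[Y|\na,\nc]-E[Y|\na,c] \leq 0$ yields precisely the previous theorem's hypothesis for $Y'=-Y$, and since $RD_{true}$, $RD_{obs}$ and $RD_{crude}$ are each linear functionals of the conditional means of $Y$ with coefficients determined by the law of $(A,C,D)$, all three negate, giving the reversed conclusions. What your route buys is rigor and economy --- it makes the word ``analogous'' precise and requires no recomputation; what the paper's route buys is only brevity of exposition, at the cost of leaving the mirrored derivation to the reader. One small point to make explicit if you write this up: since the paper defines $RD_{true}$ counterfactually as $E[Y_a]-E[Y_{\na}]$, you should either note that $(-Y)_a = -Y_a$ for a deterministic transformation of the outcome, or (simpler, and what your wording implicitly does) work directly with the adjustment-formula identity $RD_{true}=E[Y|a,c]p(c)+E[Y|a,\nc]p(\nc)-E[Y|\na,c]p(c)-E[Y|\na,\nc]p(\nc)$, which is the form the previous theorem actually uses.
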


\section{On a Driver of the Confounder}\label{sec:driver}

Consider the causal graph to the right in Figure \ref{fig:graphs}. Note that $D$ is now a driver rather than a proxy of $C$, i.e. $D$ causes $C$. The graph entails the following factorization:
\begin{equation}\label{eq:factorization2}
p(A,C,D,Y)=p(D)p(C|D)p(A|C)p(Y|A,C).
\end{equation}
We show next that our previous results also apply to the new causal graph under consideration.

\begin{theorem}\label{the:theorem1}
Consider the causal graph to the right in Figure \ref{fig:graphs}. If $E[Y|A,D]$ is monotone in $D$, then $E[Y|A,C]$ is monotone in $C$.
\end{theorem}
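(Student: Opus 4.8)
The plan is to observe that the proof of Theorem \ref{the:theorem} never used the orientation of the edge between $C$ and $D$: it relied only on two conditional independence statements, together with Bayes' rule and the monotonicity of $\sigma()$ and $\ln()$. Reading that proof off, the two structural facts it invokes are that $Y$ is conditionally independent of $D$ given $A$ and $C$ (used to replace $E[Y|a,d,c]$ by $E[Y|a,c]$ and so on), and that $A$ is conditionally independent of $D$ given $C$ (used to factor $p(a,d|c)=p(a|c)p(d|c)$ when expanding the log odds $\delta(a,d)$). So the first and essentially only step is to check that both independences continue to hold in the driver graph on the right of Figure \ref{fig:graphs}.

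Both independences survive the edge reversal, which I would confirm by d-separation on the new graph. Every path leaving $D$ now passes through $C$, namely $D \ra C \ra A$, $D \ra C \ra Y$ and $D \ra C \ra A \ra Y$; conditioning on $C$ blocks each of these chains at $C$, giving $A \ci D \mid C$ and in fact $Y \ci D \mid C$, hence $Y \ci D \mid A, C$. One can read the same conclusions off the factorization in Equation \ref{eq:factorization2}: since $p(A|C)$ and $p(Y|A,C)$ do not involve $D$, both $A$ and $Y$ depend on $D$ only through $C$.

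With both independences in hand, the remainder of the argument transfers verbatim. The identity $p(c|a,d)=\sigma(\delta(a,d))$ with $\delta(a,d)=\ln\frac{p(a,d|c)p(c)}{p(a,d|\nc)p(\nc)}$ is just Bayes' rule and is insensitive to the graph; assuming toward a contradiction that $E[Y|A,C]$ is not monotone in $C$, the same four-case analysis yields $p(c|a,d)\leq p(c|a,\nd)$, the factorization $p(a,d|c)=p(a|c)p(d|c)$ cancels the $a$- and $c$-terms exactly as before to produce Equation \ref{eq:leq}, and the complementary inequalities for $\na$ give its reverse, forcing $p(d|c)=p(d|\nc)$. The only point needing a fresh reading is the interpretation of this equality: it says $C \ci D$, which in the present graph means $D$ has no effect on $C$, i.e. $D$ is not genuinely a driver, so the degenerate case is excluded just as it was for the proxy. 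The main obstacle is thus conceptual rather than computational, namely recognizing that the proof depends on the $C$--$D$ edge only through two d-separation facts that are symmetric under reversal; once this is noticed, no calculation needs to be redone.
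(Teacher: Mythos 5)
Your proposal is correct and takes essentially the same approach as the paper, whose entire proof of Theorem \ref{the:theorem1} is the single remark that the proof of Theorem \ref{the:theorem} also applies when $D$ is a driver of $C$. Your d-separation verification that $A \ci D \mid C$ and $Y \ci D \mid A,C$ survive the edge reversal, and your reinterpretation of the degenerate case $p(d|c)=p(d|\nc)$ as $D$ failing to be a genuine driver, simply make explicit the justification the paper leaves implicit.
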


\begin{proof}
The proof of Theorem \ref{the:theorem} also applies when $D$ is a driver of $C$.
\end{proof}

\begin{corollary}\label{cor:corollary2}
Consider the causal graph to the right in Figure \ref{fig:graphs}. If $E[Y|A,D]$ is monotone in $D$, then $RD_{obs}$ lies between $RD_{true}$ and $RD_{crude}$.
\end{corollary}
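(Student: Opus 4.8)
The plan is to mirror the proof of Corollary \ref{cor:corollary}: first apply Theorem \ref{the:theorem1} to upgrade the hypothesis that $E[Y|A,D]$ is monotone in $D$ into the conclusion that $E[Y|A,C]$ is monotone in $C$, and then invoke \citet[Result 1]{OgburnandVanderWeele2012a} to conclude that $RD_{obs}$ lies between $RD_{true}$ and $RD_{crude}$. The only subtlety relative to Corollary \ref{cor:corollary} is that Result 1 was established for the proxy graph on the left of Figure \ref{fig:graphs}, whereas here $D$ is a driver of $C$. So the real content of the proof is to argue that Result 1 transfers unchanged to the driver graph on the right, and this is the step I expect to be the main obstacle.

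The key observation I would use to remove that obstacle is that the two graphs induce exactly the same family of joint distributions over $(A,C,D,Y)$. Indeed, the proxy factorization in Equation \ref{eq:factorization} can be rewritten as $p(A,C,D,Y)=p(C,D)p(A|C)p(Y|A,C)$ by grouping $p(C)p(D|C)=p(C,D)$, and the driver factorization in Equation \ref{eq:factorization2} collapses to the identical grouped form via $p(D)p(C|D)=p(C,D)$. Hence every observational quantity appearing in the statement, namely the conditional expectations $E[Y|A,D]$ and $E[Y|A,C]$, the monotonicity conditions, $RD_{crude}$ and $RD_{obs}$, takes the same value under either model, and the conditional independencies $Y \ci D \mid A,C$ and $A \ci D \mid C$ that drive the argument hold in both graphs.

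Next I would check that the single causal quantity, $RD_{true}$, likewise has the same formula in both graphs. In each case $\{C\}$ satisfies the back-door criterion for the effect of $A$ on $Y$, so $RD_{true}=\sum_c (E[Y|a,c]-E[Y|\na,c])p(c)$ regardless of the orientation of the $C$--$D$ edge; conditioning on $D$ alone leaves the single back-door path $A \la C \ra Y$ open, so $RD_{obs}$ is genuinely a partial adjustment in both settings, exactly as in the proxy case. With the joint distribution and all three risk differences thus shown to coincide, the hypotheses of Result 1 are met verbatim once Theorem \ref{the:theorem1} supplies monotonicity in $C$, and the conclusion follows.

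In summary, the whole proof reduces to the clean observation that both factorizations collapse to the same $p(C,D)p(A|C)p(Y|A,C)$, after which the argument is as immediate as in the proxy-case Corollary \ref{cor:corollary}; the only care required is to phrase the invocation of Result 1 so that it is clear we are appealing to a property of the shared joint distribution and the shared back-door formula, not to the causal direction between $C$ and $D$.
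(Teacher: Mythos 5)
Your proposal is correct and follows essentially the same route as the paper: the paper's proof likewise rests on the observation that the driver factorization in Equation \ref{eq:factorization2} can be reparameterized as the proxy factorization in Equation \ref{eq:factorization} (the paper writes out $p_L(C)$ and $p_L(D|C)$ explicitly via Bayes' rule, which is exactly your $p(D)p(C|D)=p(C,D)=p(C)p(D|C)$ grouping), so that $RD_{true}$, $RD_{obs}$, $RD_{crude}$ and the monotonicity condition coincide under both graphs and the proxy-case result applies. The only cosmetic difference is that the paper then cites Corollary \ref{cor:corollary} directly rather than unpacking it into Theorem \ref{the:theorem1} plus \citet[Result 1]{OgburnandVanderWeele2012a} as you do.
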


\begin{proof}
Note that every probability distribution that is representable by the causal graph to the right in Figure \ref{fig:graphs} can be represented by the causal graph to the left in Figure \ref{fig:graphs}: Simply, let $p_L(A|C)=p_R(A|C)$ and $p_L(Y|A,C)=p_R(Y|A,C)$ where the subscript $L$ or $R$ indicates whether we refer to Equation \ref{eq:factorization} or \ref{eq:factorization2}, respectively. Moreover, let
\[
p_L(C) = p_R(C) = p_R(C|d)p_R(d)+p_R(C|\nd)p_R(\nd)
\]
and
\[
p_L(D|C) = p_R(D|C) = \frac{p_R(C|D)p_R(D)}{p_R(C|d)p_R(d)+p_R(C|\nd)p_R(\nd)}.
\]
Therefore, $RD_{crude}$, $RD_{obs}$ and $RD_{true}$ are the same whether they are computed from the graph to the right or to the left in Figure \ref{fig:graphs}. Likewise, if $E[Y|A,D]$ is monotone in $D$ for the graph to the right in Figure \ref{fig:graphs}, then it is also monotone in $D$ for the graph to the left, which implies that $RD_{obs}$ lies between $RD_{true}$ and $RD_{crude}$ by Corollary \ref{cor:corollary}.
\end{proof}

\citet[Result 1]{VanderWeeleetal.2008} prove that (i) if $E[Y|A,C]$ and $E[A|C]$ are both nondecreasing or both nonincreasing in $C$, then $RD_{obs} \geq RD_{true}$, and (ii) if $E[Y|A,C]$ and $E[A|C]$ are one nondecreasing and the other nonincreasing in $C$, then $RD_{obs} \leq RD_{true}$. The antecedents of these rules cannot be verified empirically, because $C$ is unobserved. Therefore, one must rely on substantive knowledge to apply the rules. Luckily, the rules also hold for $D$ and, thus, the antecedents can be verified empirically. The following theorem proves it.

\begin{theorem}\label{the:theorem3}
Consider the causal graph to the right in Figure \ref{fig:graphs}. If $E[Y|A,D]$ and $E[A|D]$ are both nondecreasing or both nonincreasing in $D$, then $E[Y|A,C]$ and $E[A|C]$ are both nondecreasing or both nonincreasing in $C$. If $E[Y|A,D]$ and $E[A|D]$ are one nondecreasing and the other nonincreasing in $D$, then $E[Y|A,C]$ and $E[A|C]$ are one nondecreasing and the other nonincreasing in $C$.
\end{theorem}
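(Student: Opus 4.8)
The plan is to reduce the whole statement to a bookkeeping of the \emph{directions} of the various monotonicities, since the only content beyond Theorem~\ref{the:theorem1} lies in matching whether $E[Y|A,\cdot]$ and $E[A|\cdot]$ move in the same or in opposite directions. Note first that, because $A$ and $C$ (resp. $A$ and $D$) are binary, $E[A|C]$ and $E[A|D]$ are automatically monotone, so the hypotheses really only fix the sign of $E[A|d]-E[A|\nd]$; and Theorem~\ref{the:theorem1} already guarantees that $E[Y|A,C]$ is monotone in $C$, so that $E[Y|a,c]-E[Y|a,\nc]$ and $E[Y|\na,c]-E[Y|\na,\nc]$ share a common sign, which I will call $s_Y^C$. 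Writing $s_A^C$, $s_Y^D$, $s_A^D$ for the signs of the corresponding differences, the goal collapses to the single identity $s_Y^D s_A^D = s_Y^C s_A^C$, because ``both in the same direction in $D$'' means $s_Y^D=s_A^D$ and the desired conclusion ``both in the same direction in $C$'' means $s_Y^C=s_A^C$.

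First I would record two decomposition identities obtained from the conditional independences entailed by the right-hand graph. Because $A\ci D\mid C$, expanding $E[A|d]$ over $C$ gives
\[
E[A|d]-E[A|\nd]=\big(E[A|c]-E[A|\nc]\big)\big(p(c|d)-p(c|\nd)\big),
\]
so $s_A^D=s_A^C\cdot s_{CD}$, where $s_{CD}$ denotes the sign of $p(c|d)-p(c|\nd)$. Because $Y\ci D\mid A,C$, the analogous expansion of $E[Y|a,d]$ and $E[Y|\na,d]$ over $C$ gives
\[
E[Y|A,d]-E[Y|A,\nd]=\big(E[Y|A,c]-E[Y|A,\nc]\big)\big(p(c|A,d)-p(c|A,\nd)\big)
\]
for $A\in\{a,\na\}$, so $s_Y^D=s_Y^C\cdot s_{CD}^{A}$, where $s_{CD}^{A}$ is the sign of $p(c|A,d)-p(c|A,\nd)$.

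The crux is then to show that all three of $s_{CD}$, $s_{CD}^{a}$ and $s_{CD}^{\na}$ coincide. For this I would reuse the log-odds representation from the proof of Theorem~\ref{the:theorem}: in the driver parameterization $p(c|a,d)=\sigma(\delta(a,d))$ with $\delta(a,d)=\ln\frac{p(a|c)}{p(a|\nc)}+\ln\frac{p(c|d)}{p(\nc|d)}$, so that the difference $\delta(a,d)-\delta(a,\nd)=\ln\frac{p(c|d)}{p(\nc|d)}-\ln\frac{p(c|\nd)}{p(\nc|\nd)}$ does \emph{not} depend on $a$. Since $\sigma$ is increasing, $s_{CD}^{a}=s_{CD}^{\na}$; and since $p(c|d)=\sigma\big(\ln\frac{p(c|d)}{p(\nc|d)}\big)$ has exactly this same difference of log-odds, $s_{CD}=s_{CD}^{a}=s_{CD}^{\na}$. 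I expect this step to be the main obstacle, since it is precisely what prevents the conditioning on $A$ from flipping the direction in which $C$ responds to $D$, and it is the only place where the graph structure (rather than mere binarity) is used.

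Finally I would combine the pieces. With the common sign $s_{CD}$ we have $s_A^D=s_A^C\,s_{CD}$ and $s_Y^D=s_Y^C\,s_{CD}$, whence $s_Y^D s_A^D=s_Y^C s_A^C\,s_{CD}^2=s_Y^C s_A^C$. Thus $s_Y^D=s_A^D$ if and only if $s_Y^C=s_A^C$, which delivers both halves of the theorem simultaneously. The lone caveat is the degenerate case $s_{CD}=0$, i.e. $p(c|d)=p(c|\nd)$, in which $C$ and $D$ are independent and $D$ is not a genuine driver of $C$; I would exclude it exactly as in Theorem~\ref{the:theorem}, noting that there all the $D$-differences vanish and the hypotheses become vacuous.
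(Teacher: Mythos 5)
Your proposal is correct, and it takes a genuinely different route from the paper's proof. The paper argues by contradiction and case analysis: assuming the directions in $C$ mismatch those in $D$, it derives the posterior inequalities $p(c|a,d)\leq p(c|a,\nd)$ and $p(c|d)\geq p(c|\nd)$ from the expectation inequalities, converts both into opposing inequalities between the likelihood ratios $p(d|c)/p(d|\nc)$ and $p(\nd|c)/p(\nd|\nc)$ by reusing the computation in Theorem \ref{the:theorem}, concludes that $C$ and $D$ would have to be independent, and then must dispose of the residual subcases (``either (iii) or (iv) or both are false''), repeat with $\na$ in place of $a$, and finally invoke the monotonicity of $E[Y|A,C]$ from Theorem \ref{the:theorem1}. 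You instead prove the exact product identities $E[A|d]-E[A|\nd]=\big(E[A|c]-E[A|\nc]\big)\big(p(c|d)-p(c|\nd)\big)$ and $E[Y|A,d]-E[Y|A,\nd]=\big(E[Y|A,c]-E[Y|A,\nc]\big)\big(p(c|A,d)-p(c|A,\nd)\big)$, and use the log-odds cancellation $\delta(a,d)-\delta(a,\nd)=\ln\frac{p(c|d)}{p(\nc|d)}-\ln\frac{p(c|\nd)}{p(\nc|\nd)}$, which does not depend on the value of $A$, to show that all three posterior differences carry one common sign $s_{CD}$ --- the same structural fact the paper exploits, but packaged as an equation rather than as two one-directional implications. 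Once $C\nci D$ is assumed (the same implicit nondegeneracy the paper invokes at the end of Theorem \ref{the:theorem}), $s_{CD}=\pm1$ and every $C$-difference is a fixed-sign multiple of the corresponding $D$-difference, so matched and mismatched directions transfer in one stroke: your route handles all four direction cases uniformly, avoids the contradiction bookkeeping, and in fact re-proves Theorem \ref{the:theorem1} along the way, so your appeal to it is optional; the paper's route, in exchange, recycles the inequality machinery of Theorem \ref{the:theorem} verbatim. Two small points of care: your slogan $s_Y^D s_A^D=s_Y^C s_A^C$ should be read per stratum of $A$ when some differences vanish (your displayed identities already are per stratum, so nothing breaks), and in the degenerate case $s_{CD}=0$ the hypotheses are not vacuous but trivially true while the conclusion can fail, so the exclusion of $C\ci D$ is genuinely needed --- exactly as in the paper.
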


\begin{proof}
We prove the result when $E[Y|A,D]$ and $E[A|D]$ are both nondecreasing in $D$. The proofs for the rest of the cases are similar. Then, we have that (i) $E[Y|a, d] \geq E[Y|a, \nd]$, and (ii) $E[Y|d] \geq E[Y|\nd]$. Assume to the contrary that (iii) $E[Y|a, c] \leq E[Y|a, \nc]$, and (iv) $E[Y|c] \geq E[Y|\nc]$. As shown in the proof of Theorem \ref{the:theorem}, (i) and (iii) imply that $p(c|a,d) \leq p(c|a, \nd)$, which implies that
\[
\frac{p(d|c)}{p(d|\nc)} \leq \frac{p(\nd|c)}{p(\nd|\nc)}.
\]
Likewise, (ii) and (iv) imply that $p(c|d) \geq p(c|\nd)$, which implies that
\[
\frac{p(d|c)}{p(d|\nc)} \geq \frac{p(\nd|c)}{p(\nd|\nc)}.
\]
As shown in the proof of Theorem \ref{the:theorem}, this contradicts the fact that $C$ and $D$ are dependent. Therefore, either the assumption (iii) or (iv) or both are false. In the latter case, we get a similar contradiction. So, either the assumption (iii) or (iv) is false. We reach a similar contradiction if replace $a$ with $\na$ in the assumptions (i) and (iii). This together with the fact that $E[Y|A,C]$ and $E[A|C]$ are both monotone in $C$ by Theorem \ref{the:theorem} prove the result.
\end{proof}

\begin{corollary}\label{cor:corollary3}
Consider the causal graph to the right in Figure \ref{fig:graphs}. If $E[Y|A,D]$ and $E[A|D]$ are both nondecreasing or both nonincreasing in $D$, then $RD_{obs} \geq RD_{true}$. If $E[Y|A,D]$ and $E[A|D]$ are one nondecreasing and the other nonincreasing in $D$, then $RD_{obs} \leq RD_{true}$.
\end{corollary}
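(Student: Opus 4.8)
The plan is to mimic the one-line proof of Corollary \ref{cor:corollary}: the statement should follow by composing Theorem \ref{the:theorem3}, which transfers the monotonicity hypotheses from $D$ to $C$, with Result 1 of \citet{VanderWeeleetal.2008}, which reads off the direction of the inequality between $RD_{obs}$ and $RD_{true}$ from the monotonicity of $E[Y|A,C]$ and $E[A|C]$ in $C$. Both of these are available to me: the former is the immediately preceding theorem, and the latter is the cited external result. Nothing new about the risk differences themselves needs to be computed.

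Concretely, I would split along the two cases of the statement. First, suppose $E[Y|A,D]$ and $E[A|D]$ are both nondecreasing or both nonincreasing in $D$. The first part of Theorem \ref{the:theorem3} then yields that $E[Y|A,C]$ and $E[A|C]$ are both nondecreasing or both nonincreasing in $C$, i.e. they move in the same direction; feeding this into part (i) of Result 1 gives $RD_{obs} \geq RD_{true}$. Second, suppose $E[Y|A,D]$ and $E[A|D]$ have opposite monotonicity in $D$. The second part of Theorem \ref{the:theorem3} gives that $E[Y|A,C]$ and $E[A|C]$ have opposite monotonicity in $C$, and part (ii) of Result 1 then gives $RD_{obs} \leq RD_{true}$.

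The only point deserving care is that the two-case bookkeeping lines up, and this is exactly what Theorem \ref{the:theorem3} is engineered to guarantee: ``same direction in $D$'' maps to ``same direction in $C$,'' and ``opposite direction in $D$'' maps to ``opposite direction in $C$,'' which is precisely the pairing needed to match the two clauses of Result 1 to the two clauses of the corollary. I would also remark that Result 1 of \citet{VanderWeeleetal.2008} applies verbatim to the driver graph, since it concerns only $C$ as an unobserved confounder of $A$ and $Y$, and the conditionals $p(A|C)$ and $p(Y|A,C)$ governing that relationship are identical in the two graphs of Figure \ref{fig:graphs} (as made explicit in the proof of Corollary \ref{cor:corollary2}). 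Because the argument is a pure composition of two already-established facts, I do not anticipate a genuine obstacle; the only thing to guard against is mismatching the direction cases.
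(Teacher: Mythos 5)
Your proposal is correct and matches the paper's proof exactly: the paper also derives the corollary directly by composing Theorem \ref{the:theorem3} with Result 1 of \citet{VanderWeeleetal.2008}. Your extra remarks on the case bookkeeping and on why Result 1 applies to the driver graph are sound elaborations of that same one-line argument.
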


\begin{proof}
The result follows directly from Theorem \ref{the:theorem3} and \citet[Result 1]{VanderWeeleetal.2008}.
\end{proof}

For completeness, we show below that the converse of Theorem \ref{the:theorem3} also holds.

\begin{theorem}\label{the:theorem4}
Consider the causal graph to the right in Figure \ref{fig:graphs}. If $E[Y|A,C]$ and $E[A|C]$ are both nondecreasing or both nonincreasing in $C$, then $E[Y|A,D]$ and $E[A|D]$ are both nondecreasing or both nonincreasing in $D$. If $E[Y|A,C]$ and $E[A|C]$ are one nondecreasing and the other nonincreasing in $C$, then $E[Y|A,D]$ and $E[A|D]$ are one nondecreasing and the other nonincreasing in $D$.
\end{theorem}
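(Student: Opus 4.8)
The plan is to make explicit the mechanism that the proof of Theorem~\ref{the:theorem3} used only through a contradiction, and then read off both monotonicities in $D$ simultaneously from a single common sign. Conditioning the relevant conditional expectation on $C$ and using that $Y$ is independent of $D$ given $A$ and $C$ in the graph to the right of Figure~\ref{fig:graphs}, I would first record the decomposition
\begin{equation*}
E[Y|a,d] - E[Y|a,\nd] = \big( E[Y|a,c] - E[Y|a,\nc] \big)\big( p(c|a,d) - p(c|a,\nd) \big),
\end{equation*}
together with its analogue obtained by replacing $a$ with $\na$. In the same way, conditioning $p(a|D)$ on $C$ and using that $A$ is independent of $D$ given $C$ yields
\begin{equation*}
p(a|d) - p(a|\nd) = \big( p(a|c) - p(a|\nc) \big)\big( p(c|d) - p(c|\nd) \big).
\end{equation*}

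The key step is to show that the three posterior-difference factors $p(c|a,d) - p(c|a,\nd)$, $p(c|\na,d) - p(c|\na,\nd)$ and $p(c|d) - p(c|\nd)$ all carry the same sign, which I denote $s_D$. As recalled in the proof of Theorem~\ref{the:theorem}, each of $p(c|a,d)$, $p(c|\na,d)$ and $p(c|d)$ equals $\sigma$ applied to a log odds whose dependence on the level of $D$ enters only through the additive term $\ln p(d|c) - \ln p(d|\nc)$; the $A$-level and the prior contribute terms that are identical for $d$ and $\nd$, hence cancel in each comparison. Since $\sigma$ is increasing, the sign of every one of the three differences equals the sign of $\tfrac{p(d|c)}{p(d|\nc)} - \tfrac{p(\nd|c)}{p(\nd|\nc)}$, which does not depend on $A$; this common value is $s_D$, and it is nonzero precisely because $C$ and $D$ are dependent, exactly as argued at the end of the proof of Theorem~\ref{the:theorem}.

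Substituting the common factor $s_D$ into the three displays, the signs of $E[Y|a,d]-E[Y|a,\nd]$, of $E[Y|\na,d]-E[Y|\na,\nd]$ and of $p(a|d)-p(a|\nd)$ become $\operatorname{sign}(E[Y|a,c]-E[Y|a,\nc])\, s_D$, $\operatorname{sign}(E[Y|\na,c]-E[Y|\na,\nc])\, s_D$ and $\operatorname{sign}(p(a|c)-p(a|\nc))\, s_D$, respectively. If $E[Y|A,C]$ and $E[A|C]$ are both nondecreasing (or both nonincreasing) in $C$, all three $C$-level signs agree, so all three $D$-level differences share the sign $s_D$ (respectively $-s_D$); hence $E[Y|A,D]$ and $E[A|D]$ are both nondecreasing when $s_D>0$ and both nonincreasing when $s_D<0$, which is the first claim. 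If instead the two conditions point in opposite directions in $C$, the factor governing $E[A|D]$ flips relative to the two factors governing $E[Y|A,D]$, so $E[A|D]$ inherits the opposite direction, giving the mixed case.

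The main obstacle is the sign-agreement step of the second paragraph: it is what forces the two target conditions to move together, and it hinges on the $A$-level dropping out of the $d$-versus-$\nd$ comparison in the log odds. Once that common factor $s_D$ is isolated, the remainder is bookkeeping of signs, and I would relegate the three symmetric variants (both nonincreasing in $C$, and the two orientations of the mixed case) to a remark that they follow by the same substitution with the signs adjusted.
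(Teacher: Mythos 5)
Your proposal is correct, but it takes a genuinely different route from the paper's. The paper proves Theorem \ref{the:theorem4} by reduction: it reuses the reparameterization from the proof of Corollary \ref{cor:corollary2} to represent any distribution of the right graph by the left graph, invokes \citet[Lemma 1]{OgburnandVanderWeele2012a} on the left graph to conclude that $E[Y|A,D]$ and $E[A|D]$ are monotone in $D$, and then pins down the pairing of directions via the contrapositive of Theorem \ref{the:theorem3}. You instead argue directly, and your two ingredients check out: the product identities
\[
E[Y|a,d]-E[Y|a,\nd]=\big(E[Y|a,c]-E[Y|a,\nc]\big)\big(p(c|a,d)-p(c|a,\nd)\big),
\]
its $\na$ analogue, and $p(a|d)-p(a|\nd)=\big(p(a|c)-p(a|\nc)\big)\big(p(c|d)-p(c|\nd)\big)$ are valid for the right graph because $Y\ci D$ given $\{A,C\}$ and $A\ci D$ given $C$ hold there as well; and the common-sign step is sound, since writing each posterior as $\sigma$ of a log odds reduces every $d$-versus-$\nd$ comparison to the sign of $\ln\frac{p(d|c)p(\nd|\nc)}{p(d|\nc)p(\nd|c)}$, the $A$-dependent and prior terms cancelling. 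This makes explicit the mechanism that the paper exploits only through contradiction in Theorems \ref{the:theorem} and \ref{the:theorem3}. What each approach buys: the paper's proof is shorter and delegates the work to already-established results; yours is self-contained, produces an explicit sign formula, simultaneously yields Theorem \ref{the:theorem3} and the binary case of \citet[Lemma 1]{OgburnandVanderWeele2012a} from one decomposition, and handles the degenerate case $p(d|c)=p(d|\nc)$ (your $s_D=0$) gracefully, since then all $D$-level differences vanish and the conclusion holds trivially, whereas the paper's contradiction arguments must exclude that case by assuming $C$ and $D$ dependent. One presentational point only: you should state, as your displays implicitly assume (and as the paper's Theorem \ref{the:theorem3} proof also implicitly does), that monotonicity of $E[A|D]$ is read off from the sign of $p(a|d)-p(a|\nd)$, i.e., $a$ denotes the larger value of the binary $A$.
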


\begin{proof}
As shown in the proof of Corollary \ref{cor:corollary2}, every probability distribution that is representable by the causal graph to the right in Figure \ref{fig:graphs} can be represented by the causal graph to the left in Figure \ref{fig:graphs}. Therefore, if $E[Y|A,C]$ and $E[A|C]$ are monotone in $C$ for the right graph, then they are so for the left graph as well. Then, $E[Y|A,D]$ and $E[A|D]$ are monotone in $D$ for the left graph \cite[Lemma 1]{OgburnandVanderWeele2012a} and, thus, they are so for the right graph as well. The result follows now from the contrapositive formulation of Theorem \ref{the:theorem3}.
\end{proof}

Given a sufficiently large sample from $p(A,D,Y)$, we may conclude from it that $E[Y|A,D]$ is monotone in $D$, which implies that $RD_{obs}$ lies between $RD_{true}$ and $RD_{crude}$ by Corollary \ref{cor:corollary2}. We can also estimate $RD_{obs}$ and $RD_{crude}$ from the sample, which implies that (i) if $RD_{crude} \leq RD_{obs}$ then $RD_{obs} \leq RD_{true}$, and (ii) if $RD_{crude} \geq RD_{obs}$ then $RD_{obs} \geq RD_{true}$. Consequently, Corollary \ref{cor:corollary3} is superfluous when data over $(A,D,Y)$ are available. The following example illustrates that the corollary may be useful when no such data are available.

\begin{example}
Let $A$ and $Y$ represent a treatment and a disease, respectively. Let $D$ and $C$ represent pre-treatment covariates such as socio-economic and health status, respectively. Say that we have a sample from $p_1(A,D,Y)$ and a sample from $p_2(A,D,Y)$, i.e. we have two samples from two different populations. We are interested in drawing conclusions about $RD_{true}$ for a third population, from which we have no data. We make the following assumptions:
\begin{itemize}
\item $p_1(D) \neq p_3(D) \neq p_2(D)$ because the socio-economic profile of the third population differs from the other populations' profiles. 
\item $p_1(C|D) = p_2(C|D) = p_3(C|D)$ because this distribution represents psychological and physiological processes shared by the three populations.
\item $p_1(Y|A,C)=p_3(Y|A,C) \neq p_2(Y|A,C)$ because these distributions represent psychological and physiological processes shared by the first and third populations but not by the second. Then, $E_3[Y|A,D] = E_1[Y|A,D]$ which can be estimated from the sample from $p_1(A,D,Y)$.
\item $p_1(A|C) \neq p_2(A|C) = p_3(A|C)$ because the second and third populations share the same treatment policy but the first does not. Then, $E_3[A|D] = E_2[A|D]$ which can be estimated from the sample from $p_2(A,D,Y)$. 
\end{itemize}
Then, we cannot estimate $RD_{crude}$ for the third population and, thus, we cannot use Corollary \ref{cor:corollary2} as we did before to bound $RD_{true}$. Corollary \ref{cor:corollary3} may, on the other hand, be useful in drawing conclusions. For instance, assume that $E_3[Y|A,D]$ and $E_3[A|D]$ are both nondecreasing or both nonincreasing in $D$. Then, $RD_{obs} \geq RD_{true}$ by the corollary. If we are interested in testing whether $k \geq RD_{true}$ for a given constant $k$, then it may be worth assuming the cost of collecting data from the third population in order to compute $RD_{obs}$, in the hope that $k \geq RD_{obs}$ which confirms the hypothesis. If we are interested in testing whether $RD_{true} \geq k$, then we may also be willing to assume the cost, in the hope that $k \geq RD_{obs}$ which allows us to reject the hypothesis. In the latter case, we may instead decide to not assume the cost because we can never confirm the hypothesis. Such a seemingly negative result may save us time and money. Similar conclusions can be drawn when $E[Y|A,D]$ and $E[A|D]$ are one nondecreasing and the other nonincreasing in $D$. On the other hand, no such conclusions can be drawn from Corollary \ref{cor:corollary2} before collecting data.
\end{example}

\subsection{Bounds}

Causal effects are typically defined in terms of distributions of counterfactuals. For instance, the causal effect on $Y$ of an intervention setting $A=a$ is defined as $E[Y_a]$. It can be rewritten as follows \cite[Theorem 3.3.2]{Pearl2009}:
\[
E[Y_a] = E[Y|a,c]p(c) + E[Y|a,\nc]p(\nc).
\]
Since $C$ is unobserved, this effect cannot be computed. It can be approximated by the following quantity:
\[
S_a = E[Y|a,d]p(d) + E[Y|a,\nd]p(\nd).
\]
It can also be approximated by $S_a = E[Y|a]$. Likewise for the causal effect on $Y$ of an intervention setting $A=\na$. The following discussion applies to both approximations.

\citet[Result 1]{VanderWeeleetal.2008} prove that (i) if $E[Y|A,C]$ and $E[A|C]$ are both nondecreasing or both nonincreasing in $C$, then $S_a \geq E[Y_a]$ and $S_{\na} \leq E[Y_{\na}]$, and (ii) if $E[Y|A,C]$ and $E[A|C]$ are one nondecreasing and the other nonincreasing in $C$, then $S_a \leq E[Y_a]$ and $S_{\na} \geq E[Y_{\na}]$. These results also hold when $E[Y|A,D]$ and $E[A|D]$ are nondecreasing or nonincreasing in $D$ by Theorem \ref{the:theorem3}. The following corollary shows that the results also hold under weaker assumptions: It is not necessary that $E[Y|A,D]$ is nondecreasing or nonincreasing in $D$, it suffices with $E[Y|a,D]$ and $E[Y|\na,D]$ being so, which is always true. Specifically, we say that $E[Y|a,D]$ is nondecreasing in $D$ if 
\[
E[Y|a, d] \geq E[Y|a, \nd]
\]
and we say that it is nonincreasing in $D$ if
\[
E[Y|a, d] \leq E[Y|a, \nd].
\]
Likewise for $E[Y|\na, D]$.

\begin{corollary}
Consider the causal graph to the right in Figure \ref{fig:graphs}. If $E[Y|a,D]$ and $E[A|D]$ are both nondecreasing or both nonincreasing in $D$, then $S_a \geq E[Y_a]$. If $E[Y|a,D]$ and $E[A|D]$ are one nondecreasing and the other nonincreasing in $D$, then $S_a \leq E[Y_a]$. Likewise for $\na$ instead of $a$ replacing $\geq$ with $\leq$ and vice versa.
\end{corollary}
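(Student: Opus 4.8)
The plan is to compute $S_a - E[Y_a]$ directly, factor it into a product of two signed quantities, and then show that the sign of each factor is governed by the monotonicity directions in $D$ together with the sign of the (unobservable) $C$--$D$ association, which will cancel. I would treat both candidate approximations uniformly by writing
\[
S_a = E[Y|a,c]\,w_a + E[Y|a,\nc]\,(1-w_a),
\]
where $w_a = p(c|a)$ for the crude choice $S_a = E[Y|a]$ and $w_a = p(c|a,d)p(d) + p(c|a,\nd)p(\nd)$ for the choice $S_a = E[Y|a,d]p(d) + E[Y|a,\nd]p(\nd)$; in both cases the reduction uses the conditional independence of $Y$ and $D$ given $A$ and $C$, exactly as in Fact 2. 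Since $E[Y_a] = E[Y|a,c]p(c) + E[Y|a,\nc]\,(1-p(c))$, subtracting gives the key factorization
\[
S_a - E[Y_a] = \big(E[Y|a,c] - E[Y|a,\nc]\big)\big(w_a - p(c)\big).
\]
Note that the first factor involves only the single treatment level $a$; this is why the weaker hypothesis (monotonicity of $E[Y|a,D]$, which is automatic since $D$ is binary) suffices.

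Next I would pin down the sign of each factor. The first factor is the monotonicity direction of $E[Y|a,C]$ in $C$. For the second factor I would show that $\mathrm{sign}(w_a - p(c)) = \mathrm{sign}(p(a|c) - p(a|\nc))$, i.e. it equals the monotonicity direction of $E[A|C]$ in $C$. For the crude choice this is the standard fact that $p(c|a) - p(c)$ carries the sign of $p(a|c) - p(a|\nc)$. For the adjusted choice I would use the conditional independence of $A$ and $D$ given $C$ to write
\[
p(c|a,d) - p(c|d) = \frac{\big(p(a|c) - p(a|\nc)\big)\,p(c|d)\,p(\nc|d)}{p(a|d)},
\]
and then average over $d$ against $p(d)$ (using $\sum_d p(c|d)p(d) = p(c)$), so that every summand of $w_a - p(c)$ carries the sign of $p(a|c)-p(a|\nc)$.

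I would then transfer the two relevant directions from $D$ to $C$ using the log-odds identity already established in the proof of Theorem \ref{the:theorem}. By the conditional independence of $Y$ and $D$ given $A$ and $C$,
\[
E[Y|a,d] - E[Y|a,\nd] = \big(E[Y|a,c] - E[Y|a,\nc]\big)\big(p(c|a,d) - p(c|a,\nd)\big),
\]
and the proof of Theorem \ref{the:theorem} shows that $p(c|a,d) - p(c|a,\nd)$ has the same sign as $p(d|c) - p(d|\nc)$; hence the direction of $E[Y|a,C]$ equals the direction of $E[Y|a,D]$ times $s := \mathrm{sign}\big(p(d|c) - p(d|\nc)\big)$. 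Likewise, the conditional independence of $A$ and $D$ given $C$ gives
\[
p(a|d) - p(a|\nd) = \big(p(a|c) - p(a|\nc)\big)\big(p(c|d) - p(c|\nd)\big),
\]
and $p(c|d)-p(c|\nd)$ again has the sign $s$, so the direction of $E[A|C]$ equals the direction of $E[A|D]$ times the same $s$.

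Multiplying the two factors and using $s^2 = 1$, the unknown association sign cancels and
\[
\mathrm{sign}\big(S_a - E[Y_a]\big) = \mathrm{sign}\big(E[Y|a,d] - E[Y|a,\nd]\big)\cdot \mathrm{sign}\big(p(a|d) - p(a|\nd)\big),
\]
which is nonnegative when $E[Y|a,D]$ and $E[A|D]$ are both nondecreasing or both nonincreasing in $D$ (giving $S_a \geq E[Y_a]$) and nonpositive when they point in opposite directions (giving $S_a \leq E[Y_a]$). The statement for $\na$ follows the same way, with the reversal of the inequalities coming from $p(\na|c) - p(\na|\nc) = -(p(a|c) - p(a|\nc))$, which flips the sign of the second factor. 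The main obstacle is the sign computation of $w_a - p(c)$ for the adjusted approximation, since the crude case is immediate but the adjusted case requires the explicit expansion above; the conceptual payoff, and the reason the hypotheses are empirically testable, is that the unknown association direction $s$ enters both transferred factors identically and therefore cancels, so one never needs the sign of the $C$--$D$ association. The degenerate case in which $C$ and $D$ are independent merely turns the strict steps into equalities and is harmless for the weak inequalities claimed.
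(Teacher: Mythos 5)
Your proof is correct, but it takes a genuinely different route from the paper's. The paper's own proof is a short reduction: since the conclusion concerns only the arm $a$, it modifies the nuisance conditional $p(Y|\na,C)$ (e.g.\ sets $p(Y|\na,c)=p(Y|\na,\nc)$) so that $E[Y|A,D]$ becomes monotone in $D$ at \emph{both} treatment levels, invokes Theorem \ref{the:theorem3} together with \citet[Result 1]{VanderWeeleetal.2008} to get $S_a \geq E[Y_a]$ for the modified distribution, and observes that neither $S_a$ nor $E[Y_a]$ involves $p(Y|\na,C)$, so the bound transfers back to the original distribution. You instead prove everything directly: the factorization $S_a - E[Y_a] = \big(E[Y|a,c]-E[Y|a,\nc]\big)\big(w_a - p(c)\big)$, the identity $\mathrm{sign}(w_a - p(c)) = \mathrm{sign}(p(a|c)-p(a|\nc))$ for both choices of $S_a$ (your expansion of $p(c|a,d)-p(c|d)$ is correct, and averaging it against $p(d)$ is the one nontrivial computation), and the transfer of directions from $D$ to $C$ with the common factor $s$ cancelling. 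I checked all your displayed identities; they hold, and in the right-hand graph $A \ci D \mid C$ and $Y \ci D \mid A,C$ are valid just as in the left one, so the transfers apply. In effect you re-derive inline the binary-$C$ case of VanderWeele et al.'s Result 1 and the relevant half of Theorem \ref{the:theorem3}. The paper's route buys brevity and reuse of established results; yours buys a self-contained argument and a sharper explanation of \emph{why} the per-arm hypothesis suffices---your factorization simply never involves the other arm---whereas the paper needs the re-parameterization trick for exactly that point.

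One sentence of yours is wrong, though it does not damage the proof under the paper's standing assumptions: the closing claim that the degenerate case $C \ci D$ ``is harmless.'' If $p(d|c)=p(d|\nc)$ then $s=0$, your cancellation step $s^2=1$ fails, and in fact the corollary itself fails there: both hypotheses then hold with equalities ($E[Y|a,d]=E[Y|a,\nd]$ and $p(a|d)=p(a|\nd)$), which would force $S_a=E[Y_a]$, yet under independence the adjustment is vacuous, so $S_a=E[Y|a]$, which differs from $E[Y_a]$ whenever $C$ genuinely confounds. The paper excludes this case explicitly---see the end of the proof of Theorem \ref{the:theorem}, where $p(d|c)=p(d|\nc)$ is dismissed as meaning $D$ is not a mismeasured confounder---and the paper's own proof needs the same exclusion, since Theorem \ref{the:theorem3} derives its contradiction from the dependence of $C$ and $D$. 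So replace that sentence with an explicit appeal to the standing assumption $p(d|c)\neq p(d|\nc)$, which makes $s\in\{-1,+1\}$ and your cancellation valid; the genuinely harmless degeneracies are $E[Y|a,c]=E[Y|a,\nc]$ or $p(a|c)=p(a|\nc)$, where your factorization yields $S_a=E[Y_a]$ outright.
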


\begin{proof}
We prove the result for when $E[Y|a,D]$ and $E[A|D]$ are both nondecreasing in $D$. The proof is similar for the remaining cases. If $E[Y|\na,D]$ is not nondecreasing in $D$, then make it so by parameterizing $p(Y|\na,C)$ appropriately in Equation \ref{eq:factorization2}, e.g. by setting $p(Y|\na,c) = p(Y|\na,\nc)$ so that $E[Y|\na,d] = E[Y|\na,\nd]$. Then, as discussed previously, $S_a \geq E[Y_a]$ for the new distribution. Finally, note that the expressions for $S_a$ and $E[Y_a]$ do not involve $p(Y|\na,C)$. So, $S_a$ and $E[Y_a]$ are the same for the new and the original distributions.
\end{proof}

Of course, $S_a$ is always an upper or lower bound of $E[Y_a]$. The previous corollary allows us to determine always whether it is the one or the other, because $E[Y|a,D]$ and $E[A|D]$ are always nondecreasing or nonincreasing in $D$. Likewise for $\na$ instead of $a$. On the other hand, given a random parameterization, there is only 50 \% chance that $E[Y|a,D]$ and $E[Y|\na,D]$ are both nondecreasing or both nonincreasing in $D$ and, thus, $E[Y|A,D]$ is nondecreasing or nonincreasing in $D$ and, thus, we can apply the combination of Theorem \ref{the:theorem3} and the result by \citet[Result 1]{VanderWeeleetal.2008} as we did above.

\subsection{Transitivity}

Consider the causal graph $A \ra B \ra C$. Let $E[B|A]$ and $E[C|B]$ be nondecreasing in $A$ and $B$, respectively. Unfortunately, there is no guarantee that $E[C|A]$ is nondecreasing in $A$, i.e. the nondecreasing property is not transitive in general \cite[Example 3.2]{VanderWeeleandRobins2010}. However, transitivity does hold when $A$, $B$ and $C$ are binary random variables \cite[p. 119]{VanderWeeleandRobins2010}. For binary random variables, \citet[Lemma 1]{OgburnandVanderWeele2012a} also implies a sort of transitivity result: If $E[C|B]$ is monotone in $B$, then $E[C|A]$ is monotone in $A$. Theorem \ref{the:theorem1} implies then a sort of inverse transitivity result: If $E[C|A]$ is monotone in $A$, then $E[C|B]$ is monotone in $B$.

\section{Discussion}\label{sec:discussion}

We have extended the result in Lemma 1 by \citet{OgburnandVanderWeele2012a} stating that if $E[Y|A,C]$ is monotone in $C$, then $RD_{obs}$ lies between $RD_{true}$ and $RD_{crude}$. We have done so by showing that the result also holds when $E[Y|A,D]$ is monotone in $D$. This makes the result much more applicable in practice, as the monotonicity condition in $D$ can be verified empirically. We have also extended along the same lines the results reported in Result 1 by \citet{VanderWeeleetal.2008}. 

The monotonicity condition in $D$ is, however, sufficient but not necessary. In fact, we have shown through experiments that 94 \% of the random parameterizations of the causal graph studied resulted in $RD_{obs}$ being inside the range of $RD_{true}$ and $RD_{crude}$. However, the monotonocity condition did not hold for approximately half of them. To shed some light on this question, we have characterized a nonmonotonic case (albeit empirically untestable) where $RD_{obs}$ still lies between $RD_{true}$ and $RD_{crude}$. In future work, we plan to investigate how to relax the monotonicity condition while keeping it sufficient and empirically testable.

\section*{Acknowledgments}

We thank the Associate Editor and Reviewers for their comments, which helped us to improve our work. This work was funded by the Swedish Research Council (ref. 2019-00245).

\bibliographystyle{plainnat}
\bibliography{monotonicity}

\end{document}